\documentclass[letterpaper, journal]{IEEEtran}
\usepackage{amsmath,amsfonts}
\usepackage{array}
\usepackage[caption=false,font=normalsize,labelfont=sf,textfont=sf]{subfig}
\usepackage{textcomp}
\usepackage{stfloats}
\usepackage{url}
\usepackage{verbatim}
\usepackage{graphicx}
\usepackage{cite}

\usepackage{hyperref}
\usepackage{amsmath}
\usepackage{amsthm}
\usepackage{amssymb}
\usepackage{mathtools}
\usepackage{graphicx}
\usepackage{enumitem}
\usepackage{caption}
\usepackage{algorithm}
\usepackage{algpseudocode}
\usepackage{comment}
\usepackage{ragged2e}
\usepackage{xcolor}
\justifying
\newtheorem{proposition}{Proposition}
\newtheorem{corollary}{Corollary}
\newtheorem{lemma}{Lemma}

\algrenewcommand\algorithmiccomment[1]{\hfill\(\triangleright\) #1}
\algrenewcommand{\alglinenumber}[1]{\footnotesize #1} 
\algrenewcommand{\algorithmicindent}{1.0em} 

\begin{document}

\title{New Constructions of Polar Code Based on Refined Error Probability Analysis}

\author{Hassan Noghrei and Murad Abdullah}



\maketitle

\begin{abstract}


This paper presents a refined analysis of the block error rate (BLER) of polar codes over symmetric binary-input discrete memoryless channels under successive cancellation (SC) and SC list (SCL) decoding. A novel expression for the BLER of a polar code under SC decoding is derived directly in terms of the decoder’s LLRs. Building on this formulation, we propose a polar code construction algorithm optimized for SC decoding and evaluate its performance under SC and dynamic SC flip (DSCF) decoding against established SC-optimized constructions, including Gaussian approximation (GA)-based and Tal-Vardy polar codes. 
Furthermore, by decomposing the BLER into path loss and path selection components, we derive a novel LLR-based expression for the path loss probability, which enables an SCL-optimized polar code construction method. The proposed constructions are evaluated under SCL decoding with list sizes \(2\), \(4\), and \(8\), and are compared with 5G standard polar codes, GA-based designs, and Reed-Muller polar codes.
Simulation results show that the proposed SC-optimized polar codes achieve up to a \(0.2\)~dB performance gain under DSCF decoding over the AWGN channel compared to benchmark constructions, and exhibit superior performance over binary symmetric channels. For SCL-optimized polar codes, the proposed method achieves comparable or improved performance across all considered list sizes, with gains of up to \(0.4\)~dB relative to benchmark designs.

\end{abstract}

\begin{IEEEkeywords}
Polar codes, code construction, successive cancellation decoding, successive cancellation list decoding, and dynamic SC flip decoding.\end{IEEEkeywords}

\section{Introduction}


\IEEEPARstart{T}{he} evolution toward sixth-generation (6G) wireless systems is expected to support mission-critical applications such as telesurgery and industrial automation~\cite{wang2023road,tataria20216g}. In 5G systems, ultra-reliable and low-latency communications (URLLC) target success probabilities of \(0.99999\) for short packets with end-to-end latencies around \(1\,\mathrm{ms}\), which suffice for many industrial and safety-critical use cases~\cite{chen2018ultra,Nokia2016MissionCritical}. Looking ahead, 6G extreme URLLC (xURLLC) imposes significantly stricter requirements, including sub-millisecond latency and BLER levels as low as \(10^{-7}\)–\(10^{-9}\), to enable highly demanding real-time control and safety-critical applications~\cite{wang2023road,chen2018ultra}.


Meeting these operating points places stringent demands on the physical layer. In particular, channel coding, which directly impacts both latency and reliability, must provide ultra-high reliability with minimal decoding delay~\cite{zhang2023channel,rowshan2024channel,miao2024trends}. Among candidate schemes, polar codes~\cite{arikan2009channel} are especially attractive due to their flexible block lengths, strong finite-length performance under CRC-aided SCL decoding, and proven practicality through their adoption in 5G NR control channels~\cite{bioglio2020design}.



Polar codes are the first class of error-correcting codes proven to achieve the symmetric capacity of B-DMCs~\cite{arikan2009channel}. They rely on channel polarization, which converts independent channel uses into synthesized bit-channels that become either highly reliable or highly unreliable as the block length grows. By transmitting information over reliable bit-channels and freezing the rest, polar codes achieve capacity asymptotically under SC decoding.

SC decoding is appealing due to its low complexity and hardware efficiency~\cite{tong2023fast}, but its performance is limited at short and moderate block lengths~\cite{tal2015list,hashemi2017fast,balatsoukas2015llr}. To address this drawback, several enhanced SC-based decoders have been proposed, including SCF~\cite{afisiadis2014low}, DSCF~\cite{chandesris2018dynamic}, and SCL decoding~\cite{tal2015list}, which substantially improve error-correction performance.

The performance of a polar code critically depends on the selection of reliable bit-channels for information transmission, a process known as polar code construction. This process is inherently channel- and decoder-dependent. For the binary erasure channel (BEC) under SC decoding, Arıkan~\cite{arikan2009channel} derived exact Bhattacharyya parameters and sub-channel capacities to determine information and frozen bit positions. For the AWGN channel, he further proposed an MC-based approach to estimate bit-channel error probabilities and select the most reliable ones for information transmission.

To improve construction accuracy, Mori and Tanaka~\cite{mori2009performance} introduced a density evolution (DE)-based method that tracks LLR distributions to estimate bit error probabilities, albeit at high computational cost. Tal and Vardy~\cite{tal2013construct} later proposed a degradation–upgrading technique that efficiently approximates these probabilities for AWGN channels under SC decoding. To further reduce complexity, Trifonov~\cite{trifonov2012efficient} developed a Gaussian Approximation (GA)-based construction that models LLRs as Gaussian random variables, maintaining high accuracy while significantly reducing the computational complexity of polar code construction optimized for SC decoding over AWGN channels.

Later, the Polarization Weight (PW) method~\cite{he2017beta} was introduced as a channel-agnostic polar code construction technique, in which bit-channel reliability is determined solely by their indices rather than by specific channel characteristics. In the 5G standard, the reliability order of polar code bit-channels is defined using a fixed sequence obtained through extensive computer-based searches and simulations~\cite{bioglio2020design}.

Recent machine learning-based approaches, including reinforcement learning (RL) and genetic algorithms (GenAlg), have been explored for polar code construction. RL methods formulate bit-channel selection as a Markov decision process, while GenAlg approaches evolve decoder-aware constructions through genetic operations, enabling optimization for specific decoders such as SCL~\cite{liao2021construction,huang2019ai,elkelesh2019decoder,zhou2021low}.


Most existing polar code constructions rely on DE or its approximations to estimate bit-channel reliability. Although computationally efficient, DE-based methods implicitly assume error-free previously decoded bits, corresponding to a genie-aided decoder and neglecting error propagation in practical decoding~\cite{mori2009performance}. Moreover, these constructions are inherently channel-dependent, as reliabilities are computed for a specific channel model; consequently, codes optimized for AWGN may suffer performance degradation over other channels such as the BSC~\cite{mori2009performance,tal2013construct}. 
For more general channels, MC- and GenAlg-based approaches can yield more accurate reliability estimates, but at the cost of a large number of samples and high computational complexity~\cite{liao2021construction,vangala2015comparative}. In addition, most existing constructions are tailored to SC decoding and do not fully exploit the potential of enhanced SC-based decoders such as SCF, DSCF, or SCL.

In this work, we investigate the optimization of polar codes over arbitrary symmetric B-DMCs under SC and SCL decoding. We derive novel LLR-based approximations of the BLER for both decoders that closely match simulation results. Based on these formulations, we develop a recursive construction algorithm that selects information and frozen bits for arbitrary code lengths and rates, tailored to the target decoding algorithm. We further show that the SC BLER formulation can be exploited to reduce the complexity of DSCF decoding. Simulation results validate the effectiveness of the proposed constructions under SC, DSCF, and SCL decoding. Our main contributions are summarized as follows:

\begin{enumerate}
\item \textit{Novel BLER expression for SC decoding.}  
We derive a novel LLR-based expression for estimating the BLER of polar codes under SC decoding. The proposed formulation is fully channel-agnostic and applicable to arbitrary symmetric B-DMCs. It explicitly captures the sequential decision of SC decoding, accounting for the propagation of early decoding errors to subsequent stages. The resulting BLER estimates closely match simulation results.

\item \textit{SC-optimized polar code construction.}  
Building on the proposed BLER formulation, we develop a recursive polar code construction algorithm that selects information and frozen bits according to their estimated reliabilities. The method supports arbitrary code lengths over symmetric B-DMCs. Over the AWGN channel, the resulting codes outperform GA-based and Tal--Vardy designs under DSCF decoding, while matching or exceeding SC-optimized constructions under SC decoding. Over the BSC channel, they significantly outperform Bhattacharyya-based and GA-based designs under both SC and DSCF decoding.


\item \textit{A novel BLER expression for SCL decoding.} 
We further analyze the BLER of polar codes under SCL decoding with a fixed list size by decomposing it into path loss and path selection error probabilities. The path loss term accounts for the elimination of the correct path during list pruning, while the path selection term corresponds to selecting an incorrect candidate despite the correct path surviving. Based on this decomposition, we derive an LLR-based expression for the path loss probability, which quantifies the likelihood of eliminating the correct path at each information bit. This formulation serves as the foundation for an SCL-optimized polar code construction tailored to a given list size.

    \item \textit{SCL-optimized polar code construction.}  
    Based on the proposed path loss error formulation for SCL decoding, we develop a recursive polar code construction algorithm optimized for SCL decoding with a given list size, which freezes bit-channels with high path loss error probabilities. Simulation results demonstrate that the resulting SCL-optimized polar codes achieve comparable or superior performance compared with benchmark constructions, including the 5G polar code, GA-based designs, and Reed-Muller polar codes, across all considered list sizes.
\end{enumerate}

This paper extends our previous work presented in~\cite{noghrei2025new}, in which we introduced a new formulation of the BLER under SC decoding and proposed a polar code construction method optimized for SC decoding. In this paper, we further develop this framework by providing additional insights, a more detailed analysis, and extended simulation results. Moreover, we derive a new formulation for the BLER under SCL decoding and propose a corresponding SCL-optimized polar code construction algorithm.

The remainder of this paper is organized as follows. Section~\ref{sec:preliminary} reviews the fundamentals of polar codes and the decoding algorithms considered in this work. Section~\ref{sec:bep} presents the proposed analytical formulation for estimating the BLER under SC decoding. Section~\ref{sec:SC_cons} describes the recursive polar code construction algorithm optimized for SC decoding. Section~\ref{sec:scl_bep} introduces the proposed analytical formulation for estimating the path loss error probability under SCL decoding. Section~\ref{sec:scl_cons} presents the recursive polar code construction algorithm optimized for SCL decoding. Section~\ref{sec:sim} reports simulation results for different channel models and analyzes the computational complexity of the proposed algorithms. Finally, Section~\ref{sec:con} concludes the paper and outlines potential directions for future research.

\section{PRELIMINARIES}\label{sec:preliminary}

This section briefly reviews the fundamentals of polar codes, which underpin the subsequent analysis.

\subsection{Notations}

Throughout this paper, calligraphic letters (e.g., $\mathcal{X}$ and $\mathcal{Y}$) denote sets. The Cartesian product of $\mathcal{X}$ and $\mathcal{Y}$ is written as $\mathcal{X} \times \mathcal{Y}$, and $\mathcal{X}^n$ denotes the $n$-fold Cartesian power. The set of consecutive integers from $a$ to $b$ is denoted by $[\![a,b]\!]$. Vectors are denoted by $v_1^N = (v_1,\dots,v_N)$, and $v_i^j$ represents the subvector $(v_i,\dots,v_j)$. For an index set $\mathcal{I} \subseteq [\![1,N]\!]$ with complement $\mathcal{I}^c$, the subvectors $v_{\mathcal{I}}$ consists of elements indexed by $\mathcal{I}$. Matrices are denoted by bold uppercase letters; for example, $\mathbf{G}_{N\times M}$ and $\mathbf{G}_N$ represent $N\times M$ and $N\times N$ matrices, respectively. 

Consider a B-DMC $\mathcal{W}:\mathcal{X}\to\mathcal{Y}$ with $\mathcal{X}=\{0,1\}$. The transition probability is $\mathcal{W}(y\mid x)$. If $\mathcal{W}$ is symmetric, there exists a permutation $\pi_1$ on $\mathcal{Y}$ such that \(\mathcal{W}(y\mid1)=\mathcal{W}(\pi_1(y)\mid0),\quad \forall\,y\in\mathcal{Y}.\)
Let $\pi_0$ be the identity permutation and define $\pi_x(y)\triangleq x\cdot y$. Then, for all $a,x\in\mathcal{X}$ and $y\in\mathcal{Y}$, the channel symmetry condition implies that \(\mathcal{W}(y\mid x\oplus a)=\mathcal{W}(a\cdot y\mid x).\)
This action extends elementwise to vectors: for $x_1^N\in\mathcal{X}^N$ and $y_1^N\in\mathcal{Y}^N$,
\[
x_1^N\cdot y_1^N \triangleq (x_1\cdot y_1,\dots,x_N\cdot y_N).
\]
We use $(\mathcal{X}^N \times \mathcal{Y}^N, \mathbb{P})$ to define the probability space with probability measure $\mathbb{P}$ defined as in Section~V of~\cite{arikan2009channel}. On this space, we define the random vectors $(U_1^N, X_1^N, Y_1^N, \hat{U}_1^N)$, which represent the encoder input, the channel input to $\mathcal{W}^N$ (consisting of $N$ i.i.d.\ uses of $\mathcal{W}$), the channel output, and the decoder decisions, respectively. For each realization $(u_1^N, y_1^N) \in \mathcal{X}^N \times \mathcal{Y}^N$, these random vectors take the values $U_1^N = u_1^N$, $X_1^N = u_1^N \mathbf{G}_N$, and $Y_1^N = y_1^N$, while the decoder output $\hat{U}_1^N$ is generated sequentially according to the SC decoding rule.

\subsection{Polar Codes}

A $(N,K)$ polar code of length $N=2^n$ and rate $R=K/N$ is a linear block code constructed via channel polarization over $N$ independent uses of a B-DMC $\mathcal{W}$. Channel polarization produces a set of sub-channels (\(\mathcal{W}_N^{(i)}:\mathcal{X} \to \mathcal{Y}^N \times \mathcal{X}^{i-1}\), where \(\mathcal{X}^0 \triangleq \emptyset\)) that are partitioned into an information set $\mathcal{I}$ and its complement $\mathcal{I}^c$, which contains the so-called frozen bits. The $K=|\mathcal{I}|$ most reliable sub-channels carry information bits, while the remaining sub-channels correspond to frozen bits, which are fixed (typically to zero) and known to both the transmitter and receiver.

 A binary source vector \( u_1^N \), consisting of \(K\) information bits and \(N - K\) frozen bits, is encoded into a codeword \( x_1^N \) as
\(x_1^N = u_1^N \mathbf{G}_N\),
where \( \mathbf{G}_N \) denotes the generator matrix of the polar code. The matrix \( \mathbf{G}_N \) is recursively defined as
\(\mathbf{G}_N = \mathbf{F}_2^{\otimes n}\), where \( \otimes n \) denotes the \(n\)-th Kronecker power, and \(\mathbf{F}_2 = 
\begin{bmatrix}
1 & 0 \\
1 & 1
\end{bmatrix}\)
is the basic \(2 \times 2\) kernel matrix.

At the receiver, the decoder estimates the vector $\hat{u}_1^N$ from the received signal $y_1^N$ using log-LLRs. SC decoding proceeds sequentially, estimating each bit $\hat{u}_i$ based on $y_1^N$ and the previously decoded bits $\hat{u}_1^{i-1}$. The decision LLR for bit $u_i$ is given by
\begin{equation}\label{eq:llr_out}
L_i = \ln \left( \frac{\Pr\left(U_i = 0 \mid Y_1^N =y_1^N,U_1^{i-1} =\hat{u}_1^{i-1}\right)}{\Pr\left(U_i = 1 \mid Y_1^N =y_1^N, U_1^{i-1} =\hat{u}_1^{i-1}\right)} \right),
\end{equation}
and the hard decision rule is
\begin{equation}
\hat{u}_i =
\begin{cases}
0, & \text{if } i \in \mathcal{I}^c \text{ or } L_i \ge 0, \\
1, & \text{otherwise}.
\end{cases}
\end{equation}
Although SC decoding is asymptotically optimal, its performance degrades at short-to-moderate block lengths. To address this limitation, several enhanced SC-based decoders have been proposed, such as SCL~\cite{tal2015list}, SCF~\cite{afisiadis2014low}, and DSCF~\cite{chandesris2018dynamic}.


SCL decoding alleviates the error propagation of SC decoding by maintaining a list of candidate decoding paths. At each information bit position, every path is extended over both bit values $0$ and $1$, while frozen bits are fixed to zero. When the number of candidate paths exceeds the list size $L$, the least reliable paths are pruned based on their path metrics.
At the decoding stage $i$, the path metric of the $l$-th path is defined as \cite{balatsoukas2015llr}
\begin{equation}\label{eq:path_metr}
\mathrm{PM}[l]_i = \sum_{j=1}^{i} \ln \!\left( 1 + e^{-(1 - 2\hat{u}[l]_j)L[l]_j} \right),
\end{equation}
where $\hat{u}[l]_j$ and $L[l]_j$ denote the estimated bit and the corresponding LLR, respectively. The decoder retains the $L$ paths with the smallest path metrics~\cite{tal2015list}.
Although SCL decoding significantly improves error-correction performance, its computational complexity scales as $\mathcal{O}(L N \log N)$~\cite{sarkis2015fast}, which is higher than that of SCF and DSCF decoding schemes~\cite{afisiadis2014low,chandesris2018dynamic}.

    The SCF decoder enhances the error-correction capability of SC decoding by selectively correcting unreliable bit decisions. When SC decoding fails, as detected by a CRC test, SCF identifies the most unreliable bit (typically the one with the smallest absolute LLR) and flips its decision in a subsequent SC decoding attempt~\cite{afisiadis2014low}. 
The DSCF decoder extends SCF by dynamically selecting bit-flipping positions using an optimized reliability metric. Instead of flipping the least reliable bit, DSCF identifies the bit position that maximizes the likelihood of correcting the SC decoding error. Additional decoding attempts are initiated only when the initial SC decoding fails the CRC test. For bit-flip order $\omega=1$, the flipped position is selected as~\cite{chandesris2018dynamic}
\begin{equation}\label{eq:dscf_met}
i^* = \arg \max_{i\in \mathcal{I}} \left( e^{-\alpha |L_i|}
\prod_{\substack{j \in \mathcal{I} \\ j \le i}}
\frac{1}{1 + e^{-\alpha |L_j|}} \right),
\end{equation}
where $\alpha$ is a scaling parameter, set to $\alpha=0.3367$ in this work following~\cite{chandesris2018dynamic}.

\subsection{Error Performance of Polar Codes}

The BLER $P(\mathcal{E})$ of a $(N,K,\mathcal{I})$ polar code is defined as the probability of the block error event
\[
\mathcal{E} \triangleq \bigl\{ (u_1^N,y_1^N)\in\mathcal{X}^N\times\mathcal{Y}^N : u_{\mathcal{I}}\neq \hat u_{\mathcal{I}} \bigr\}.
\]
Following~\cite{arikan2009channel}, $\mathcal{E}$ can be decomposed into disjoint events $\mathcal{E}_i$, where $\mathcal{E}_i$ denotes the event that the first decoding error occurs at bit $i\in\mathcal{I}$. Each $\mathcal{E}_i$ is contained in
\[
\mathcal{B}_i \triangleq \bigl\{ (u_1^N,y_1^N)\in\mathcal{X}^N\times\mathcal{Y}^N : u_i \neq \hat u_i  \bigr\},
\]
which corresponds to an error at bit $i$ regardless of prior decisions. Consequently, the BLER admits the upper bound
\begin{align}\label{eq:ar_bound}
P(\mathcal{E})=\sum_{i\in\mathcal{I}} P(\mathcal{E}_i)
\le \sum_{i\in\mathcal{I}} P(\mathcal{B}_i)
\le \sum_{i\in\mathcal{I}} Z\!\left(\mathcal{W}_N^{(i)}\right),
\end{align}
where $Z(\mathcal{W}_N^{(i)})$ is the Bhattacharyya parameter of the $i$-th bit-channel.
This bound is widely used for polar code construction to identify reliable subchannels~\cite{arikan2009channel,tal2013construct}. Building on this formulation, the next section derives a refined BLER expression for improved polar code construction.

\section{Block error rate Under SC decoding}\label{sec:bep}

In this section, we derive new formulations for the single-bit error probabilities \(P(\mathcal{E}_i)\) and \(P(\mathcal{B}_i)\) in~(\ref{eq:ar_bound}). Building on these results, we obtain a new BLER formulation for both a genie-aided SC decoder and a practical SC decoder. In the genie-aided case, the decoder is assumed to have access to the
true transmitted bits when computing the LLR in~\eqref{eq:llr_out}, i.e., \(U_1^{i-1} = u_1^{i-1}\) is used to evaluate \(L_i\). In contrast, the practical SC decoder computes \(L_i\) using its own previous decisions, e.i., \(U_1^{i-1} = \hat u_1^{i-1}\).
To derive this formulation, for each \( i \in \mathcal{I} \), we decompose the error event \(\mathcal{E}_i\) into disjoint events \(\mathcal{E}_{u_1^i}\), defined for each \( u_1^i \in \mathcal{X}^i \), as
\begin{align*}
   \resizebox{0.45\textwidth}{!}{$
   \begin{aligned}
       \mathcal{E}_{u_1^i} =
   \Big\{ \big((u_1^i, v_{i+1}^N), y_1^N\big) \in
   \{u_1^i\}&\times\mathcal{X}^{N-i-1} \times \mathcal{Y}^N :\\
   &\hat{u}_1^{i-1} = u_1^{i-1},\; \hat{u}_i \neq u_i
   \Big\}.
   \end{aligned}
   $}
\end{align*}
The event \( \mathcal{E}_{u_1^i} \) corresponds to all input–output pairs for which the decoder, given the received vector \( y_1^N \) associated with the transmitted message \( (u_1^i, v_{i+1}^N) \), correctly decodes the first \( i-1 \) bits and makes its first decoding error at bit position \( i \). The decoding outcomes of the remaining bits, from positions \( i+1 \) to \( N \), are unconstrained.
Since the events \( \{\mathcal{E}_{u_1^i}\}_{u_1^i \in \mathcal{X}^i} \) are mutually disjoint, the error event \( \mathcal{E}_i \) admits the decomposition
\[
\mathcal{E}_i = \bigsqcup_{u_1^i \in \mathcal{X}^i} \mathcal{E}_{u_1^i},
\]
which implies that the corresponding error probability can be expressed as
\begin{align}\label{eq:pu_i}
    P(\mathcal{E}_i) = \sum_{u_1^i \in \mathcal{X}^i} \frac{1}{2^i}
    P\big( \mathcal{E}_{u_1^i} \big),
\end{align}
where each term is given by
\begin{align}
    \label{eq:pe_ui}
&P\big( \mathcal{E}_{u_1^i} \big) \notag\\
&= \sum_{y_1^N \in \mathcal{Y}^N}
\Pr\!\left(U_i \neq u_i,\,
U_1^{i-1} = u_1^{i-1},\, Y_1^N=y_1^N \right) \notag \\
&= \sum_{y_1^N \in \mathcal{Y}^N} \Pr(Y_1^N=y_1^N)\, \notag\\ & \cdot \Pr\!\left(U_i \neq u_i,\,
U_1^{i-1} = u_1^{i-1} \mid Y_1^N=y_1^N \right) \notag \\
&\stackrel{(a)}{=} \sum_{y_1^N \in \mathcal{Y}^N} \Pr(Y_1^N=y_1^N)\,
\Pr\!\left(U_1^{i-1} = u_1^{i-1} \mid Y_1^N=y_1^N \right) \notag \\
&\cdot
\Pr\!\left(U_i \neq u_i \mid
Y_1^N=y_1^N,\, U_1^{i-1} = u_1^{i-1} \right) \notag \\
&\stackrel{(b)}{=} \sum_{y_1^N \in \mathcal{Y}^N}
\Pr(Y_1^N=y_1^N)\, \mathcal{P}_e(u_1^i \mid y_1^N),
\end{align}
where equalities \((a)\) and \((b)\) follow from the chain rule, and
\[
\mathcal{P}_e(u_1^i \mid y_1^N)
\triangleq p_e^{(i)} \prod_{\substack{j \in \mathcal{I} \\ j < i}}
\bigl(1 - p_e^{(j)}\bigr),
\]
with \cite{balatsoukas2015llr,chandesris2018dynamic}
\begin{align} \label{eq:genie_aided}
p_e^{(j)}
&= \Pr\!\left(U_j \neq u_j\mid
Y_1^N=y_1^N,\, U_1^{j-1} = u_1^{j-1}\right) \notag\\
&= \frac{1}{1 + \exp\!\left(\lvert L_j \rvert\right)},
\end{align}
as shown in \cite{balatsoukas2015llr}.

Lemmas~\ref{prop:sym_m} and~\ref{prop:ind_u} establish two fundamental properties of
\( \mathcal{P}_e(u_1^i \mid y_1^N) \) for genie-aided SC decoding over symmetric
B-DMCs. Specifically, they show that this error probability exhibits a symmetry
with respect to input translations and does not depend on the particular
transmitted message. As a result, the analysis can be carried out without loss
of generality by assuming transmission of the all-zero sequence, which greatly
simplifies the derivation of a tractable BLER expression.

\begin{lemma}\label{prop:sym_m}
For a polar code under genie-aided SC decoding over a symmetric B-DMC
\( \mathcal{W} \), the error probability
\( \mathcal{P}_e\!\left(u_1^i \mid y_1^N \right) \) satisfies the following symmetry
property:
\[
\mathcal{P}_e\!\left(u_1^i \mid y_1^N \right)
=
\mathcal{P}_e\!\left(u_1^i \oplus a_1^i \mid b_1^N \cdot y_1^N \right),
\]
for all \( 1 \le i \le N \),
\( (u_1^i, y_1^N) \in \mathcal{X}^i \times \mathcal{Y}^N \), and
\( a_1^N \in \mathcal{X}^N \), where \( b_1^N = a_1^N \mathbf{G}_N \).
\end{lemma}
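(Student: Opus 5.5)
Since \(\mathcal{P}_e(u_1^i \mid y_1^N)\) is a product of the factors \(p_e^{(j)} = 1/(1+\exp(|L_j|))\) for \(j\in\mathcal{I}\), \(j\le i\), it suffices to show that each genie-aided decision magnitude \(|L_j|\) is unchanged when \((u_1^{j-1}, y_1^N)\) is replaced by \((u_1^{j-1}\oplus a_1^{j-1},\, b_1^N\cdot y_1^N)\). I therefore plan to prove the stronger, per-bit statement
\[
L_j\bigl(b_1^N\cdot y_1^N,\; u_1^{j-1}\oplus a_1^{j-1}\bigr) \;=\; (1-2a_j)\, L_j\bigl(y_1^N,\; u_1^{j-1}\bigr),
\]
for every \(j\le i\). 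This says the LLR is either preserved or sign-flipped. Taking absolute values then gives equality of every \(p_e^{(j)}\), and multiplying over \(j\) gives the lemma.

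To prove the per-bit identity, I write \(L_j\) through the bit-channel transition probabilities. With uniform priors,
\[
L_j=\ln\frac{\mathcal{W}_N^{(j)}(y_1^N,u_1^{j-1}\mid 0)}{\mathcal{W}_N^{(j)}(y_1^N,u_1^{j-1}\mid 1)},
\qquad
\mathcal{W}_N^{(j)}(y_1^N,u_1^{j-1}\mid u_j)=\frac{1}{2^{N-1}}\sum_{u_{j+1}^N}\mathcal{W}^N(y_1^N\mid u_1^N\mathbf{G}_N).
\]
The key step is Arıkan's symmetry argument (Prop.~13 of~\cite{arikan2009channel}). The channel symmetry \(\mathcal{W}(y\mid x\oplus a)=\mathcal{W}(a\cdot y\mid x)\), applied componentwise, gives
\[
\mathcal{W}^N\bigl(b_1^N\cdot y_1^N \mid x_1^N\bigr)=\mathcal{W}^N\bigl(y_1^N\mid x_1^N\oplus b_1^N\bigr),
\]
using that \(\pi_1\) is an involution so that \(\pi_x\) are group actions. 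Setting \(x_1^N=(u_1^N\oplus a_1^N)\mathbf{G}_N\) and using linearity, \(x_1^N\oplus b_1^N=u_1^N\mathbf{G}_N\). Substituting \(u_{j+1}^N\mapsto u_{j+1}^N\oplus a_{j+1}^N\), which is a bijection of the summation range, then yields
\[
\mathcal{W}_N^{(j)}\bigl(b_1^N\cdot y_1^N,\; u_1^{j-1}\oplus a_1^{j-1}\mid u_j\oplus a_j\bigr)=\mathcal{W}_N^{(j)}\bigl(y_1^N,u_1^{j-1}\mid u_j\bigr).
\]
Taking the ratio at \(u_j=0\) and \(u_j=1\): if \(a_j=0\) the numerator and denominator are unchanged, and if \(a_j=1\) they swap. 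This gives the \((1-2a_j)\) sign factor.

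The main obstacle is bookkeeping rather than depth. I must check the direction of the permutation action: that \(b\cdot(b\cdot y)=y\) holds, so the identity can be inverted, and that the genie-aided conditioning in \(p_e^{(j)}\) is exactly on the translated prefix \(u_1^{j-1}\oplus a_1^{j-1}\). I also need that the product index set \(\{j\in\mathcal{I}, j\le i\}\) does not depend on \(u\), which holds because \(\mathcal{I}\) is fixed. Finally, I should note that \(\mathcal{P}_e\) depends on \(u_1^i\) only through \(u_1^{i-1}\) (the genie prefix), so the translation of \(u_i\) by \(a_i\) is harmless.
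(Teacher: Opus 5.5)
Your proposal is correct and takes essentially the same route as the paper, which omits the proof and defers to Ar{\i}kan's Proposition~13 (channel symmetry plus linearity of $\mathbf{G}_N$). That proposition is exactly your bit-channel identity $\mathcal{W}_N^{(j)}(b_1^N\cdot y_1^N,\, u_1^{j-1}\oplus a_1^{j-1}\mid u_j\oplus a_j)=\mathcal{W}_N^{(j)}(y_1^N,u_1^{j-1}\mid u_j)$, which you then carry through to $|L_j|$ and to the product form of $\mathcal{P}_e$.
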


\begin{proof}
The result follows directly from the symmetry of the channel and the linearity
of the polar transform. The proof mirrors the arguments in
Proposition~13 of~\cite{arikan2009channel} and Lemma~1 of~\cite{wu2024new}, and is
therefore omitted for brevity.
\end{proof}

\begin{lemma}\label{prop:ind_u}
The event probability \( P(\mathcal{E}_{u_1^i}) \) for a polar code under a genie-aided SC decoder over a symmetric B-DMC \( \mathcal{W} \) is independent of the
transmitted input sequence and satisfies
\[
P(\mathcal{E}_{u_1^i}) = P(\mathcal{E}_{0_1^i}),
\]
where \( 0_1^i \) denotes the all-zero vector of length \( i \).
\end{lemma}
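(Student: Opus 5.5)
The plan is to start from the representation \eqref{eq:pe_ui}, namely
\[
P(\mathcal{E}_{u_1^i})=\sum_{y_1^N\in\mathcal{Y}^N}\Pr(Y_1^N=y_1^N\mid U_1^i=u_1^i)\,\mathcal{P}_e(u_1^i\mid y_1^N),
\]
where the weight is the likelihood of $y_1^N$ given the prefix $u_1^i$, averaged uniformly over the suffix $v_{i+1}^N$. This is the form implicit in the definition of $\mathcal{E}_{u_1^i}$ and in the factor $2^{-i}$ of \eqref{eq:pu_i}. I will write this weight explicitly as
\[
\frac{1}{2^{N-i}}\sum_{v_{i+1}^N}\mathcal{W}^N\!\bigl(y_1^N\mid (u_1^i,v_{i+1}^N)\mathbf{G}_N\bigr),
\]
and show that the whole sum is unchanged when $u_1^i$ is replaced by $0_1^i$. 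The natural tool is Lemma~\ref{prop:sym_m}, applied with $a_1^N=(u_1^i,0_{i+1}^N)$ and $b_1^N=a_1^N\mathbf{G}_N$. This choice gives $u_1^i\oplus a_1^i=0_1^i$, so that $\mathcal{P}_e(u_1^i\mid y_1^N)=\mathcal{P}_e(0_1^i\mid b_1^N\cdot y_1^N)$.

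Next I will handle the likelihood weight. Using $\mathcal{W}(y\mid x\oplus a)=\mathcal{W}(a\cdot y\mid x)$ componentwise, together with the linearity $(u_1^i,v_{i+1}^N)\mathbf{G}_N=(0_1^i,v_{i+1}^N)\mathbf{G}_N\oplus b_1^N$, I get
\[
\mathcal{W}^N\bigl(y_1^N\mid (u_1^i,v_{i+1}^N)\mathbf{G}_N\bigr)=\mathcal{W}^N\bigl(b_1^N\cdot y_1^N\mid (0_1^i,v_{i+1}^N)\mathbf{G}_N\bigr).
\]
This holds for every suffix $v_{i+1}^N$. Averaging over $v_{i+1}^N$ therefore shows that the weight for $u_1^i$ at $y_1^N$ equals the weight for $0_1^i$ at $b_1^N\cdot y_1^N$.

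Combining these two identities, each summand for $u_1^i$ at $y_1^N$ equals the summand for $0_1^i$ at $b_1^N\cdot y_1^N$. Since $y\mapsto b\cdot y$ is a permutation of $\mathcal{Y}$ (either $\pi_0$ or $\pi_1$), the map $y_1^N\mapsto b_1^N\cdot y_1^N$ is a bijection of $\mathcal{Y}^N$. The change of variables $z_1^N=b_1^N\cdot y_1^N$ then gives $P(\mathcal{E}_{u_1^i})=P(\mathcal{E}_{0_1^i})$. For a continuous output alphabet such as AWGN, the same argument goes through with integrals, because the symmetry permutation preserves the reference measure.

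I expect the main obstacle to be the careful identification of the weight in \eqref{eq:pe_ui}. That display writes $\Pr(Y_1^N=y_1^N)$, but the event $\mathcal{E}_{u_1^i}$ conditions on the transmitted prefix, so the correct weight is the conditional likelihood given $u_1^i$ (marginalised over the suffix). I will make this explicit before invoking symmetry. A second point to check is that $\mathcal{P}_e$ depends on $y_1^N$ only through the genie-aided LLRs $|L_j|$, $j\le i$. This is exactly what Lemma~\ref{prop:sym_m} covers, so no further decoder analysis should be needed.
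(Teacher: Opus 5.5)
Your proposal is correct and follows the paper's argument: both invoke Lemma~\ref{prop:sym_m} to move the prefix to $0_1^i$, then use that $y_1^N\mapsto b_1^N\cdot y_1^N$ is a bijection of $\mathcal{Y}^N$. The paper takes $a_1^N=(u_1^i,v_{i+1}^N)$, so that $b_1^N=x_1^N$, and does not spell out how the weight $\Pr(Y_1^N=y_1^N)$ behaves under the change of variables; your choice $a_1^N=(u_1^i,0_{i+1}^N)$ makes the shift independent of the suffix, and your explicit check of the prefix-conditioned likelihood supplies exactly the step the paper leaves out.
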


\begin{proof}
The proof follows arguments similar to Corollary~1 in~\cite{arikan2009channel}
and~\cite{wu2024new}. Let
\( x_1^N = (u_1^i, v_{i+1}^N)\mathbf{G}_N \) denote the transmitted codeword. Then,
\begin{align*}
P\!\left( \mathcal{E}_{u_1^i} \right)
&= \sum_{y_1^N \in \mathcal{Y}^N}
\Pr(Y_1^N=y_1^N)\, \mathcal{P}_e(u_1^i \mid y_1^N) \\
&= \sum_{y_1^N \in \mathcal{Y}^N}
\Pr(Y_1^N=y_1^N)\, \mathcal{P}_e(0_1^i \mid x_1^N \cdot y_1^N) \\
&= P\!\left( \mathcal{E}_{0_1^i} \right),
\end{align*}
where the last equality follows from the symmetry of \( \mathcal{W} \), which
implies
\( \{ x_1^N \cdot y_1^N : y_1^N \in \mathcal{Y}^N \} = \mathcal{Y}^N \)
for any fixed \( x_1^N \).
\end{proof}

Since \( P(\mathcal{E}_{u_1^i}) \), and hence \( \mathcal{P}_e(u_1^i \mid y_1^N) \), is independent of the transmitted message,
we introduce the simplified notation \( \mathcal{P}_{e,i}(y_1^N) \triangleq \mathcal{P}_e(u_1^i \mid y_1^N) \).
This observation leads directly to the following corollary, which provides a new expression for the BLER of a genie-aided SC decoder that assumes access to the true transmitted message.

\begin{corollary}\label{cor:bler_new}
The BLER of an \( (N,K,\mathcal{I}) \) polar code under a genie-aided SC decoder over any symmetric B-DMC \( \mathcal{W} \) can be expressed as
\begin{align*}\label{eq:our_genei}
P(\mathcal{E}) = \sum_{i \in \mathcal{I}} \sum_{y_1^N \in \mathcal{Y}^N} \Pr(Y_1^N=y_1^N)\, \mathcal{P}_{e,i}(y_1^N).
\end{align*}
\end{corollary}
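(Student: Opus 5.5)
The plan is to chain together the identities already in place: the disjoint decomposition of the block error event, the per-message decomposition \eqref{eq:pu_i}, the expression \eqref{eq:pe_ui}, and the message-independence from Lemmas~\ref{prop:sym_m} and~\ref{prop:ind_u}.

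\emph{Step 1: decompose by first error position.} The events $\mathcal{E}_i$, $i\in\mathcal{I}$, are pairwise disjoint, since each is defined by the index of the first erroneous information bit. Their union is exactly $\mathcal{E}$. Frozen positions never contribute, because $\hat u_i = u_i = 0$ there. Hence
\[
P(\mathcal{E})=\sum_{i\in\mathcal{I}}P(\mathcal{E}_i),
\]
which is the equality part of \eqref{eq:ar_bound}.

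\emph{Step 2: decompose each $\mathcal{E}_i$ by message prefix.} For each $i\in\mathcal{I}$, write $\mathcal{E}_i$ as the disjoint union of the events $\mathcal{E}_{u_1^i}$. With the uniform prior on $U_1^N$, this gives \eqref{eq:pu_i}:
\[
P(\mathcal{E}_i)=\sum_{u_1^i}2^{-i}P(\mathcal{E}_{u_1^i}).
\]

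\emph{Step 3: remove the message dependence.} By Lemma~\ref{prop:ind_u}, every term satisfies $P(\mathcal{E}_{u_1^i})=P(\mathcal{E}_{0_1^i})$. The $2^i$ weights $2^{-i}$ therefore sum to one, and $P(\mathcal{E}_i)=P(\mathcal{E}_{0_1^i})$. Now apply \eqref{eq:pe_ui} with $u_1^i=0_1^i$:
\[
P(\mathcal{E}_{0_1^i})=\sum_{y_1^N}\Pr(Y_1^N=y_1^N)\,\mathcal{P}_e(0_1^i\mid y_1^N).
\]
In the notation introduced after Lemma~\ref{prop:ind_u}, $\mathcal{P}_e(0_1^i\mid y_1^N)=\mathcal{P}_{e,i}(y_1^N)$. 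Summing over $i\in\mathcal{I}$ yields the claimed expression.

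\emph{Main obstacle.} The delicate point is the notational step identifying $\mathcal{P}_e(u_1^i\mid y_1^N)$ with a function $\mathcal{P}_{e,i}(y_1^N)$ of $i$ and $y_1^N$ alone. Lemma~\ref{prop:sym_m} only gives invariance under the joint map $(u_1^i,y_1^N)\mapsto(u_1^i\oplus a_1^i,\,b_1^N\cdot y_1^N)$, not invariance in $u_1^i$ with $y_1^N$ held fixed.

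I would therefore avoid the pointwise identification and argue only at the level of sums. I would fix the representative $u_1^i=0_1^i$ and interpret $\mathcal{P}_{e,i}(y_1^N)$ as $\mathcal{P}_e(0_1^i\mid y_1^N)$, i.e., the quantity computed under the all-zero transmission convention. This is legitimate because Lemma~\ref{prop:ind_u} equates the averaged quantities $P(\mathcal{E}_{u_1^i})$ directly.

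If one prefers a symmetric argument, note that for each fixed $u_1^i$ the bijection $y_1^N\mapsto x_1^N\cdot y_1^N$ on $\mathcal{Y}^N$, with $x_1^N=(u_1^i,v_{i+1}^N)\mathbf{G}_N$, preserves $\Pr(Y_1^N=\cdot)$ under the uniform-input measure. Reindexing the sum in \eqref{eq:pe_ui} by this bijection then gives the same value for every $u_1^i$, and the corollary follows.
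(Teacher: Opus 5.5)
Your proposal is correct and follows the paper's own proof. Both combine \eqref{eq:pu_i} with Lemma~\ref{prop:ind_u} so that all \(2^i\) summands coincide, collapse the weights \(2^{-i}\), express the common value through \eqref{eq:pe_ui}, and sum over \(i\in\mathcal{I}\) using the equality in \eqref{eq:ar_bound}. Your remark that Lemma~\ref{prop:sym_m} gives only joint invariance, so \(\mathcal{P}_{e,i}(y_1^N)\) should be read as \(\mathcal{P}_e(0_1^i\mid y_1^N)\), is a valid tightening of the paper's ``and hence'' step after Lemma~\ref{prop:ind_u}, not a different route.
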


\begin{proof}
For an $(N,K,\mathcal{I})$ polar code transmitted over a symmetric B-DMC $\mathcal{W}$, combining~\eqref{eq:pu_i} with Lemmas~\ref{prop:sym_m} and~\ref{prop:ind_u} yields
\begin{align*}
P(\mathcal{E}_i)
&= \sum_{u_1^i \in \mathcal{X}^i} \frac{1}{2^i}
   \sum_{y_1^N \in \mathcal{Y}^N} \Pr(Y_1^N=y_1^N)\, \mathcal{P}_{e,i}(y_1^N) \\[2pt]
&= \sum_{y_1^N \in \mathcal{Y}^N} \Pr(Y_1^N=y_1^N)\, \mathcal{P}_{e,i}(y_1^N).
\end{align*}
Summing $P(\mathcal{E}_i)$ over all $i \in \mathcal{I}$ yields the desired result.
\end{proof}

Proposition~\ref{cor:bler_new} provides a new formulation for the BLER of a \((N,K, \mathcal{I})\) polar code under genie-aided SC decoding over a symmetric B-DMC, corresponding to the first summation term in~\eqref{eq:ar_bound}. In this setting, the decoder is assumed to have access to the true transmitted message when computing the LLRs in~\eqref{eq:llr_out}. In practice, however, SC decoding relies on the decoder’s own previous decisions when evaluating~\eqref{eq:llr_out}, and the resulting BLER corresponds to the second summation term in~\eqref{eq:ar_bound}. Therefore, to derive a BLER formulation of a \((N,K, \mathcal{I})\) polar code under a practical SC decoder, we must develop new expressions for the error probabilities \(P(\mathcal{B}_i)\) for each \(i \in \mathcal{I}\). To this end, for each \( i \in \mathcal{I} \), and in analogy with the decomposition of \( \mathcal{E}_i \), we decompose the event \( \mathcal{B}_i \) into disjoint events \( \mathcal{B}_{u_1^i} \), defined for each \( u_1^i \in \mathcal{X}^i \), as
\begin{align*}
\resizebox{0.48\textwidth}{!}{$
\mathcal{B}_{u_1^i}
=
\Big\{
\big((u_1^i, v_{i+1}^N), y_1^N\big)
\in
\{u_1^i\} \times \mathcal{X}^{N-i-1} \times \mathcal{Y}^N
:\,
u_i \neq \hat u_i
\Big\}.
$}
\end{align*}

That is, \( \mathcal{B}_{u_1^i} \) consists of all input–output pairs for which the
decoder, given the received vector \( y_1^N \) corresponding to the transmitted
message \( (u_1^i, v_{i+1}^N) \), incorrectly decodes the \( i \)-th bit, regardless
of the correctness of the other bits. Since the events
\( \{\mathcal{B}_{u_1^i}\}_{u_1^i \in \mathcal{X}^i} \) are mutually disjoint, the
error event \( \mathcal{B}_i \) admits the decomposition
\[
\mathcal{B}_i = \bigsqcup_{u_1^i \in \mathcal{X}^i} \mathcal{B}_{u_1^i},
\]
which implies that the corresponding error probability can be expressed as
\begin{align}\label{eq:p_sc}
P(\mathcal{B}_i)
= \sum_{u_1^i \in \mathcal{X}^i} \frac{1}{2^i}
P\!\left(\mathcal{B}_{u_1^i}\right),
\end{align}
where
\begin{align*}
P\!\left(\mathcal{B}_{u_1^i}\right)
= \sum_{y_1^N \in \mathcal{Y}^N}
\Pr(Y_1^N=y_1^N)\, \mathcal{P}_e'(u_1^i \mid y_1^N),
\end{align*}
and
\begin{align*}
\mathcal{P}_e'(u_1^i \mid y_1^N) \triangleq
\Pr\!\left(
U_i \neq \hat u_i , \; U_1^{i-1} = \hat{u}_1^{i-1} \mid Y_1^N=y_1^N
\right).
\end{align*}
The quantity \( \mathcal{P}_e'(u_1^i \mid y_1^N) \) represents the probability that the \( i \)-th bit is decoded incorrectly, namely \(\hat{u}_i\oplus 1\) is correct, while the previous decoder decisions are \( \hat{u}_1^{i-1} \), given the received vector \( y_1^N \). By the chain rule, this probability can be factorized as
\begin{align}\label{eq:p_prime}
&\mathcal{P}_e'(u_1^i \mid y_1^N) \notag \\ 
&=
\Pr\!\left(
U_i \neq \hat u_i \mid Y_1^N=y_1^N,\; U_1^{i-1} =\hat{u}_1^{i-1}
\right) \quad \quad\\ 
&\cdot
\Pr\!\left(
U_1^{i-1}=\hat{u}_1^{i-1} \mid Y_1^N=y_1^N
\right) \notag.
\end{align}
Applying the chain rule again, the second term can be written as
\begin{align*}
    \Pr\!\big(U_1^{i-1} &=\hat{u}_1^{i-1} \mid Y_1^N= y_1^N \big)=\\& \prod_{j=1}^{i-1}
\Pr\!\left(U_j=\hat{u}_j \mid Y_1^N= y_1^N,\; U_1^{j-1} = \hat{u}_1^{j-1}\right).
\end{align*}
For frozen indices \( j \in \mathcal{I}^c \), we have
\( \Pr(U_j =\hat{u}_j \mid Y= y_1^N,U_1^{j-1} =\hat{u}_1^{j-1}) = 1 \). For information indices
\( j \in \mathcal{I} \), it follows from~\cite{balatsoukas2015llr} that
\[
\Pr\!\left(U_j = \hat{u}_j \mid Y_1^N= y_1^N,\; U_1^{j-1} =\hat{u}_1^{j-1}\right)
= \frac{1}{1 + \exp\!\left(-|L_j|\right)},
\]
where \( L_j \) denotes the decision LLR corresponding to bit \( j \). Therefore,
\begin{align*}
\Pr\!\left(U_1^{i-1}=\hat{u}_1^{i-1} \mid Y_1^N = y_1^N \right)
&= \prod_{\substack{j \in \mathcal{I} \\ j < i}}
\frac{1}{1 + \exp\!\left(-|L_j|\right)}.
\end{align*}
Similarly, the first term in~\eqref{eq:p_prime} can be approximated
as~\cite{balatsoukas2015llr}
\begin{align}\label{eq:sc_prac}
\Pr\!\left(U_i \neq u_i \mid Y_1^N= y_1^N,\; U_1^{i-1} =\hat{u}_1^{i-1}\right)
= \frac{1}{1 + \exp\!\left(|L_i|\right)}.
\end{align}
Furthermore, the symmetry and input-independence properties established for \( \mathcal{P}_e(u_1^i \mid y_1^N) \) in Lemmas~\ref{prop:sym_m} and~\ref{prop:ind_u} also hold for \( \mathcal{P}_e'(u_1^i \mid y_1^N) \). Consequently, we omit the
explicit dependence on \( u_1^i \) and define \( \mathcal{P}_{e,i}'(y_1^N) \triangleq \mathcal{P}_e'(u_1^i \mid y_1^N) \), which leads to the following result.

\begin{proposition}\label{pro:fer_form} 
The BLER of an \( \left( N, K, \mathcal{I} \right) \) polar code under SC decoding
    over a symmetric B-DMC \( \mathcal{W} \) satisfies  
    \begin{align}\label{eq:fer_bound}
     \resizebox{0.43\textwidth}{!}{$
     \begin{aligned}
         P(\mathcal{E}) &= \sum_{i\in \mathcal{I}} P(\mathcal{E}_i)
                    =  \sum_{i\in \mathcal{I}} \sum_{y_1^N\in\mathcal{Y}^N} \Pr(Y_1^N=y_1^N) \mathcal{P}_{e,i}\left(y_1^N\right),\\
                   &\leq \sum_{i\in \mathcal{I}} P(\mathcal{B}_i)
                   = \sum_{i\in \mathcal{I}} \sum_{y_1^N\in\mathcal{Y}^N} \Pr(Y_1^N=y_1^N) \mathcal{P}_{e,i}^\prime\left(y_1^N\right).
     \end{aligned}
     $}
    \end{align}
\end{proposition}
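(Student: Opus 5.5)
The plan is to assemble the statement from the decompositions already set up in this section. It splits into an equality chain for $\mathcal{E}$, an inequality $P(\mathcal{E}_i)\le P(\mathcal{B}_i)$, and an equality chain for the $\mathcal{B}_i$.

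\textbf{Equality for $\mathcal{E}$.} The first line follows from Arıkan's decomposition $\mathcal{E}=\bigsqcup_{i\in\mathcal{I}}\mathcal{E}_i$. Here $\mathcal{E}_i$ is the event that the first error occurs at information index $i$; frozen bits are always decoded correctly, so the first error must fall in $\mathcal{I}$. The events are disjoint, so $P(\mathcal{E})=\sum_{i\in\mathcal{I}}P(\mathcal{E}_i)$. For each $i$ we then:
\begin{itemize}
\item write $P(\mathcal{E}_i)$ via \eqref{eq:pu_i} and \eqref{eq:pe_ui};
\item invoke Lemmas~\ref{prop:sym_m} and~\ref{prop:ind_u} to replace every $P(\mathcal{E}_{u_1^i})$ by $P(\mathcal{E}_{0_1^i})$;
\item observe that the weights $2^{-i}$ over $u_1^i\in\mathcal{X}^i$ sum to one.
\end{itemize}
This is exactly the computation in the proof of Corollary~\ref{cor:bler_new}, so this part can simply cite that corollary.

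\textbf{Inequality $P(\mathcal{E}_i)\le P(\mathcal{B}_i)$.} This is a containment of events: $\mathcal{E}_i\subseteq\mathcal{B}_i$. On $\mathcal{E}_i$ the first $i-1$ decisions agree with $u_1^{i-1}$ and $\hat u_i\neq u_i$, so in particular $\hat u_i\neq u_i$. Monotonicity of $\mathbb{P}$ gives the bound, and summing over $\mathcal{I}$ yields the middle inequality.

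\textbf{Equality for $\mathcal{B}_i$.} The last equality reuses the decomposition \eqref{eq:p_sc}. Here $P(\mathcal{B}_{u_1^i})=\sum_{y}\Pr(Y_1^N=y_1^N)\,\mathcal{P}'_e(u_1^i\mid y_1^N)$, with $\mathcal{P}'_e$ factorized through \eqref{eq:p_prime} and the LLR expressions. We then need the analogue of Lemmas~\ref{prop:sym_m} and~\ref{prop:ind_u} for $\mathcal{P}'_e$, namely
\[
\mathcal{P}'_e(u_1^i\mid y_1^N)=\mathcal{P}'_e(u_1^i\oplus a_1^i\mid b_1^N\cdot y_1^N).
\]
With that in hand, the same averaging over $u_1^i$ with weights $2^{-i}$ collapses the sum to $\sum_y \Pr(Y_1^N=y_1^N)\,\mathcal{P}'_{e,i}(y_1^N)$.

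\textbf{Main obstacle.} I expect the hardest step to be justifying symmetry for $\mathcal{P}'_e$, since it depends on the decoder's own decisions $\hat u_1^{i-1}$ rather than the true bits. I would show that the practical SC decoder is equivariant under the action of the channel symmetry. Replacing $y_1^N$ by $b_1^N\cdot y_1^N$ with $b_1^N=a_1^N\mathbf{G}_N$ does two things:
\begin{itemize}
\item it flips the sign of each decision LLR $L_j$ according to $a_j$, when conditioned on correspondingly shifted past bits;
\item it leaves $|L_j|$ unchanged.
\end{itemize}
The shift must be by the decoder's decisions, which requires an induction on $j$ using the recursive $f/g$ LLR updates of the polar transform, in the style of Arıkan's Proposition~13. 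Frozen bits need care: they are forced to $0$ regardless of the sign of $L_j$, so equivariance holds only for shifts with $a_{\mathcal{I}^c}=0$, i.e. codewords. That restriction suffices, because the averaging only has to range over transmitted messages with fixed frozen values.

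Since $\mathcal{P}'_e$ depends on $y_1^N$ only through the magnitudes $|L_j|$ and the error indicator structure, it is invariant under this action. Finally, the map $y_1^N\mapsto x_1^N\cdot y_1^N$ is a measure-preserving bijection on $\mathcal{Y}^N$, which completes the argument exactly as in Lemma~\ref{prop:ind_u}.
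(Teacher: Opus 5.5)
Your outline is the paper's own proof. The paper cites \eqref{eq:ar_bound}, \eqref{eq:p_sc} and Corollary~\ref{cor:bler_new} in one line and asserts, without proof, that $\mathcal{P}'_e$ has the symmetry of Lemmas~\ref{prop:sym_m}--\ref{prop:ind_u}. Your containment $\mathcal{E}_i\subseteq\mathcal{B}_i$ is fine. There is, however, a genuine gap that you inherit from the paper: the identity you import from \eqref{eq:p_sc}, namely $P(\mathcal{B}_{u_1^i})=\sum_{y_1^N}\Pr(Y_1^N=y_1^N)\,\mathcal{P}'_e(u_1^i\mid y_1^N)$, is false. Because $\hat u_1^i$ is a function of $y_1^N$, the tower property gives $\sum_{y_1^N}\Pr(Y_1^N=y_1^N)\Pr(U_1^{i-1}=\hat u_1^{i-1},U_i\neq\hat u_i\mid Y_1^N=y_1^N)=\Pr(U_1^{i-1}=\hat U_1^{i-1},U_i\neq\hat U_i)=P(\mathcal{E}_i)$, not $P(\mathcal{B}_i)$. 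It drops every outcome in which bit $i$ is wrong after an earlier error. Concretely, take $N=2$, $\mathcal{I}=\{1,2\}$, a BSC with crossover $p<1/2$, channel LLRs $\ell_1,\ell_2\in\{\pm c\}$ with $c=\ln\frac{1-p}{p}$, and the all-zero codeword. Every output gives $(1+e^{-|L_1|})^{-1}=(1-p)^2+p^2$ and $|L_2|=2c$. Hence $\mathcal{P}'_{e,2}\equiv\frac{p^2}{p^2+(1-p)^2}\bigl((1-p)^2+p^2\bigr)=p^2=P(\mathcal{E}_2)$. But SC decodes $u_2$ wrongly for $y_1^2=(0,1)$ and $(1,1)$, so $P(\mathcal{B}_2)=p$. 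The last equality therefore fails. The symmetry of $\mathcal{P}'_e$, which you single out as the main obstacle, is not where the difficulty lies.

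The first equality has the same defect. The genie-aided product in \eqref{eq:pe_ui} equals the posterior of the first-error event only on outputs where $\hat u_1^{i-1}=u_1^{i-1}$; elsewhere it uses LLRs computed from a past the decoder did not follow. In the same example the genie LLR is $L_2=\ell_1+\ell_2=0$ for $y_1^2\in\{(1,0),(0,1)\}$. The $i=2$ term of Corollary~\ref{cor:bler_new} then equals $p\bigl((1-p)^2+p^2\bigr)$ rather than $P(\mathcal{E}_2)=p^2$, so citing that corollary does not establish the first line either. What your ingredients do support is a different statement. The sum $\sum_{i\in\mathcal{I}}\mathcal{P}'_{e,i}(y_1^N)$ telescopes to $1-\prod_{j\in\mathcal{I}}(1+e^{-|L_j|})^{-1}$, the posterior block-error probability. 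So $\sum_{i\in\mathcal{I}}\mathbb{E}[\mathcal{P}'_{e,i}]=P(\mathcal{E})$ whenever the LLRs are exact posteriors (e.g.\ $\mathcal{I}=[\![1,N]\!]$), which is consistent with the agreement seen in Fig.~\ref{fig:fer_bound}. Your equivariance argument is the right tool for justifying evaluation on the all-zero codeword. For the BSC and BEC, however, $L_j=0$ occurs with positive probability, and the rule $L_j\ge 0\Rightarrow\hat u_j=0$ is not equivariant at such ties. You would therefore need symmetric (e.g.\ randomized) tie-breaking.
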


\begin{proof}
    The proof follows directly from (\ref{eq:ar_bound}), (\ref{eq:p_sc}), and Corollary~\ref{cor:bler_new}.
\end{proof}

\begin{figure*}[t]
    \centering
    \begin{minipage}{0.32\textwidth}
    \centering
        \includegraphics[width=0.8\linewidth]{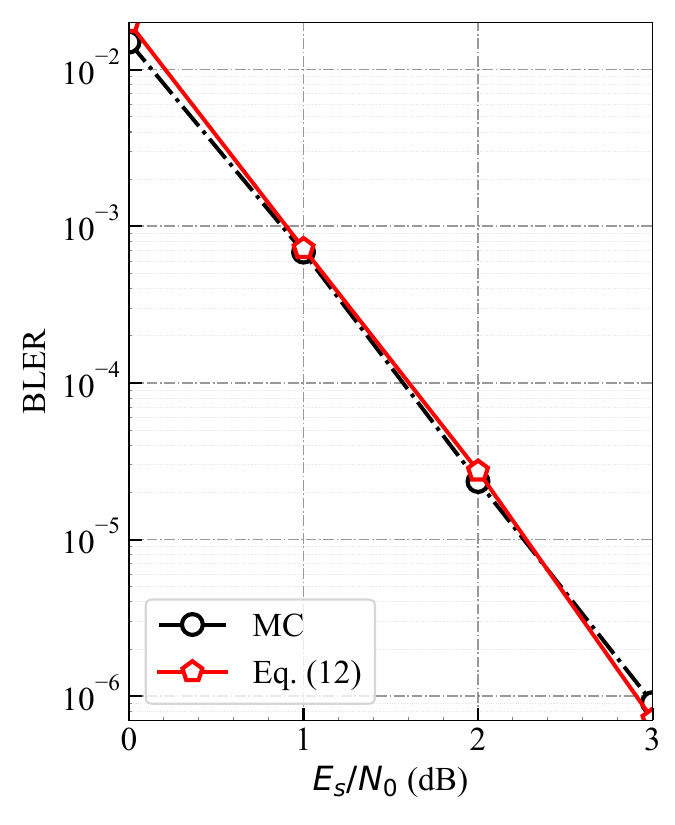}
        \caption*{(a) AWGN}
    \end{minipage}
    \begin{minipage}{0.32\textwidth}
    \centering
        \includegraphics[width=0.8\linewidth]{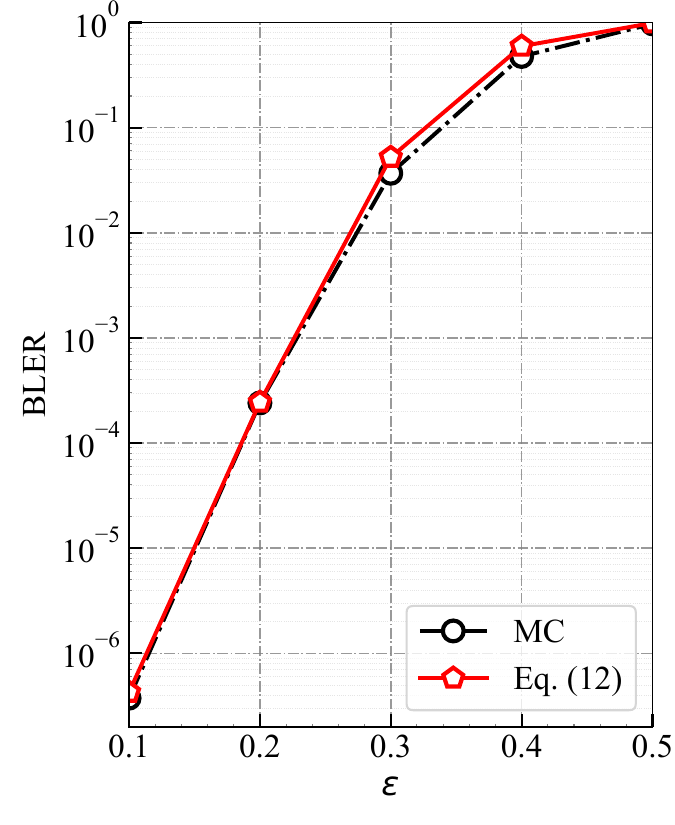}
        \caption*{(b) BEC}
    \end{minipage}
    \begin{minipage}{0.32\textwidth}
    \centering
        \includegraphics[width=0.8\linewidth]{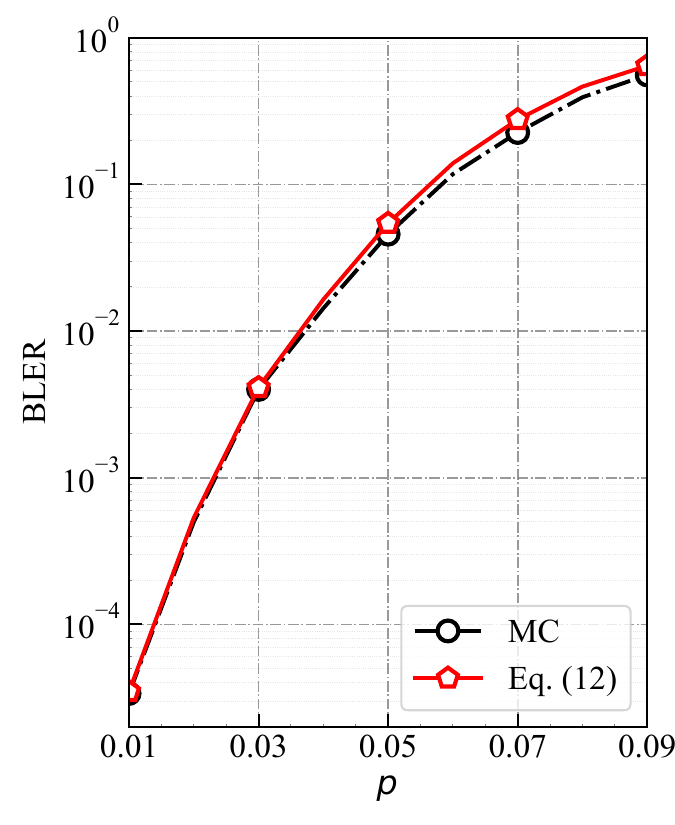}
        \caption*{(c) BSC}
    \end{minipage}
    \caption{Subfigures (a)–(c) show the BLER performance of the 5G polar code with \((N, K) = (256, 128)\) over AWGN, BEC, and BSC channels under SC decoding. The BLERs are obtained through Monte Carlo simulations and evaluated using ~(\ref{eq:fer_bound}). }
    \label{fig:fer_bound}
\end{figure*}

Figure~\ref{fig:fer_bound} shows the BLER for a 5G polar code of length \( N = 256 \) and dimension \( K = 128 \), evaluated via MC simulation and our derivation for the BLER of a given polar code under a practical SC decoding in ~(\ref{eq:fer_bound}). The all-zero codeword is transmitted over various channels using BPSK modulation, including AWGN, Binary Erasure Channel (BEC), and Binary Symmetric Channel (BSC). In the MC simulation, the process is repeated until at least 100 erroneous blocks are observed. For the proposed approximation, the BLER is computed using the same set of channel samples generated during the MC simulation. As shown in Figure~\ref{fig:fer_bound}, the BLER approximated by ~(\ref{eq:fer_bound}) closely matches the MC simulation results, confirming its accuracy as a reliable approximation of the BLER of a polar code under SC decoding over various channels, which shows ~(\ref{eq:fer_bound}) is channel-agnostic, and it depends only on SC decoder.

\begin{figure}[t]
    \centering
    \includegraphics[width=\linewidth]{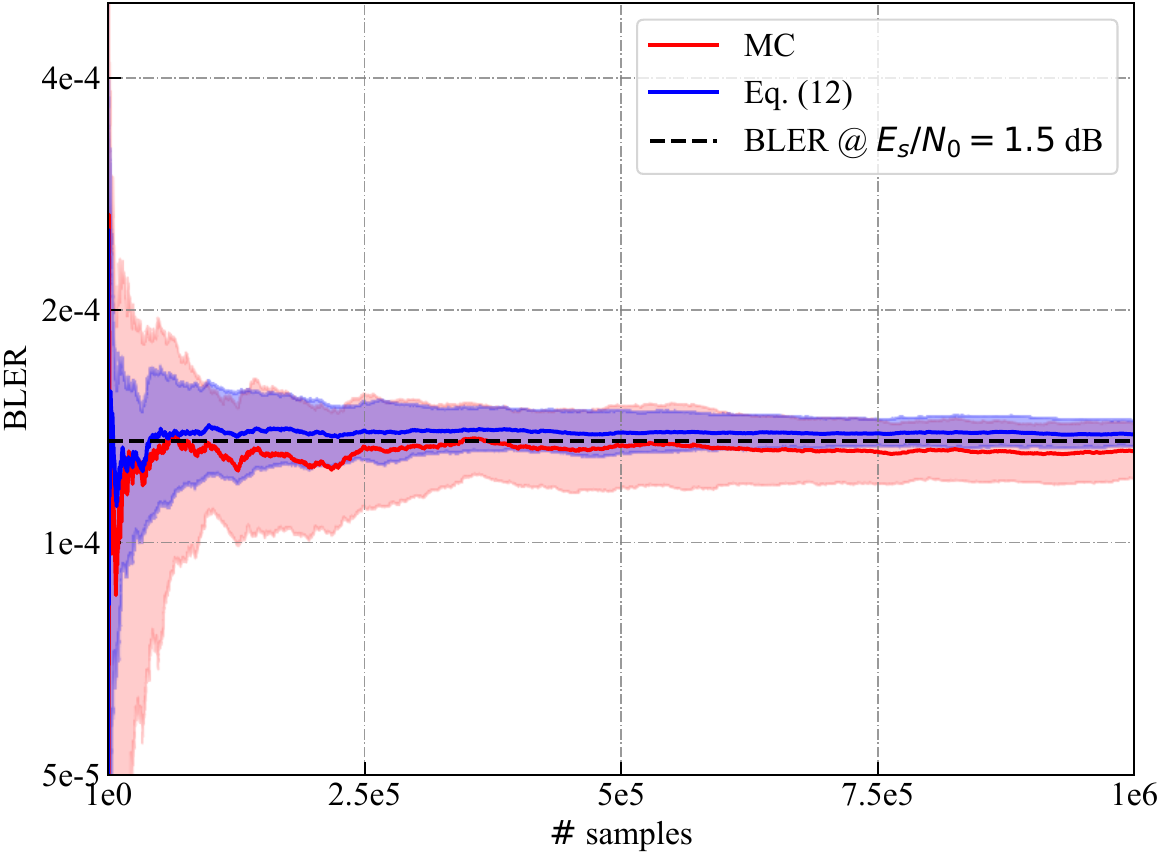}
    \caption{Convergence of BLER estimation at $E_s/N_0 = 1.5$~dB for a $(256,128)$ polar code.
MC (red) and the proposed approximation in~(\ref{eq:fer_bound}) (blue) are shown
versus the number of channel samples. Solid lines indicate the mean over
multiple runs, shaded regions denote $\pm$ one standard deviation, and the
dashed line represents a high-precision MC reference.
}
    \label{fig:r_err}
\end{figure}

This generality enables the proposed formulation to be applied to a wide range
of channels, with a key application being the construction of polar codes for
arbitrary symmetric B-DMCs (see Section~\ref{sec:SC_cons}). The proposed
algorithm follows an MC-like, channel-agnostic approach; however, unlike
conventional MC methods that rely solely on observed error
events, it exploits decoder soft information to estimate both bit-wise and
block error probabilities, thereby improving sample efficiency.

To illustrate this advantage, we compare the convergence behavior of the
proposed BLER estimator with that of the conventional MC simulation at
$E_s/N_0 = 1.5$~dB as a function of the number of channel samples.
Fig.~\ref{fig:r_err} shows the MC BLER estimate (red) and the proposed
semi-analytical approximation in~(\ref{eq:fer_bound}) (blue). Solid lines denote
the mean BLER obtained over multiple independent runs, while the shaded
regions indicate $\pm$ one standard deviation, capturing the estimation
variability. The dashed horizontal line corresponds to a high-precision MC
reference BLER obtained by simulating until at least $500$ frame errors are
observed. Although both methods converge to the same BLER as the number of
samples increases, the proposed approximation exhibits reduced variance and
faster convergence, demonstrating its improved sample efficiency relative to
standard MC simulation.

As another application of~\eqref{eq:fer_bound}, the proposed formulation can be used to reduce the computational complexity of DSCF decoding by limiting the set of candidate bit-flipping positions. Specifically, only bit indices satisfying $\mathbb{E}\!\left[\mathcal{P}_{e,i}'\right] \ge \gamma$ are considered, where
\begin{align}\label{eq:bit_errors}
\mathbb{E}\!\left[\mathcal{P}_{e,i}'\right]
= \sum_{y_1^N \in \mathcal{Y}^N} \Pr(Y_1^N=y_1^N)\, \mathcal{P}_{e,i}'(y_1^N),
\end{align}
denotes the error probability of bit $i \in \mathcal{I}$ and $\gamma \in [0,1]$ is a threshold parameter.
To evaluate this approach, we consider a $(256,128)$ 5G polar code concatenated with a 16-bit CRC and decoded using DSCF with bit-flip order $\omega=1$ and at most $T=10$ additional attempts. The quantities $\mathbb{E}\!\left[\mathcal{P}_{e,i}'\right]$ are approximated at $E_s/N_0=-2$~dB, and bit-flipping is restricted to positions satisfying the threshold condition with $\gamma \in \{0.001, 0.003, 0.005\}$.
For $\gamma=0.001$, the number of candidate bits is reduced from $144$ to at most $40$, corresponding to a $60\%$ reduction in computational effort, without noticeable performance degradation, as shown in Fig.~\ref{fig:fer_dscf_mdscf}. Higher thresholds yield further complexity reductions at the cost of degraded error-correction performance.

\begin{figure}[t]
    \centering
    \begin{minipage}{0.4\textwidth}
    \centering
        \includegraphics[width=0.8\linewidth]{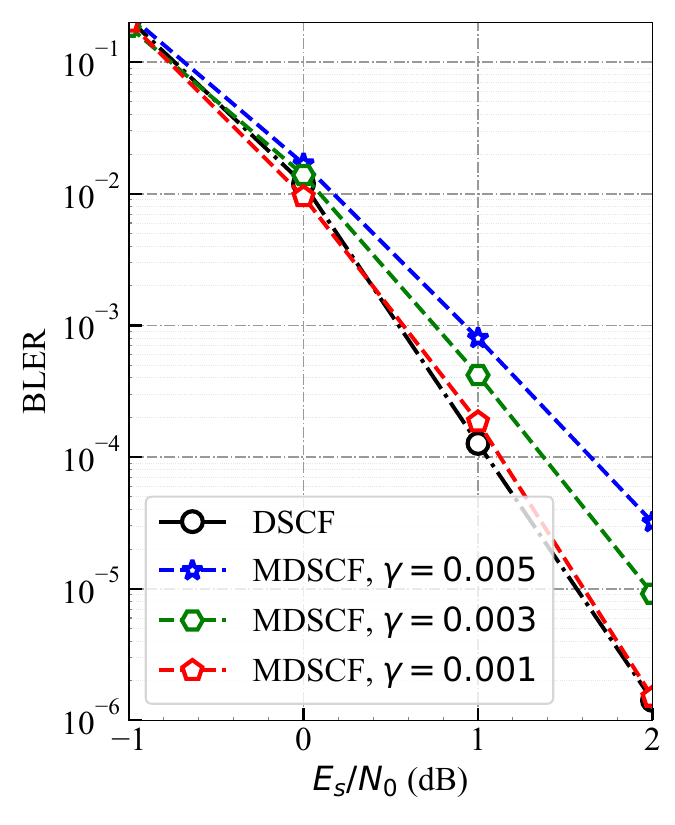}
    \end{minipage}
    \caption{The BLER of the a $(256,128)$ polar code under both conventional and modified DSCF (MDSCF) decoding, evaluated with three different thresholds \(\gamma \in \{0.001, 0.003, 0.005\}\).}
    \label{fig:fer_dscf_mdscf}
\end{figure}

\section{SC-Optimized Polar Code Construction}\label{sec:SC_cons}

In this section, we illustrate how the proposed block error probability formulation can be employed to construct polar
codes adapted to a specific channel.
To design a polar code based on ~(\ref{eq:fer_bound}), consider \(\mathbb{E}\left[ \mathcal{P}_{e,i}^\prime \right]\) in ~(\ref{eq:bit_errors}) denoting the error probability for bit index \( i \in \mathcal{I} \).  According to Proposition~\ref{pro:fer_form}, the BLER of a \((N, K)\) polar code, with information set \( \mathcal{I} \), is upper bounded by
\begin{align}\label{eq:fer}
    P(\mathcal{E}) \leq \sum_{i \in \mathcal{I}} \mathbb{E}\left[ \mathcal{P}_{e,i}^\prime \right] 
                   \leq \sum_{i \in \mathcal{I}} Z\left(\mathcal{W}_N^{(i)}\right).
\end{align}

Based on ~\eqref{eq:fer}, we propose an incremental and nested-like construction method. This algorithm begins with a polar code of rate one, and it progressively identifies and freezes the least reliable bit-channels by estimating the error probability \(\mathbb{E}[\mathcal{P}_{e,i}^\prime]\) for each \(i \in \mathcal{I}\) under SC decoding.

The proposed construction method begins with all bit indices considered as candidate information bits, i.e., 
\(\mathcal{I}_0 \gets \{1, 2, \dots, N\}\), yielding an initial code dimension of \(K_0 \gets N\), 
and an empty frozen set \(\mathcal{I}_0^c \gets \emptyset\).
At each iteration \(t\), the algorithm computes the error metric 
\(\mathbb{E}[\mathcal{P}_{e,i}^\prime]\) for all \(i \in \mathcal{I}_t\), and identifies the index \(i^*\) 
with the highest value—indicating the least reliable (most error-prone) bit. This index is then moved to the frozen set.

\begin{algorithm}[t]
\caption{SC-Optimized Polar Code Construction}
\label{alg:incremental_polar}
\begin{algorithmic}[1]
\State \textbf{Input:} $N$, $K$, channel $\mathcal{W}$, target $E_s/N_0$
\State \textbf{Output:} Information set $\mathcal{I}$
\State Initialize $\mathcal{I}_0 \gets \{1,\dots,N\}$, $K_0 \gets N$, $t \gets 0$
\While{$K_t > K$}
    \State $\{\mathbb{E}[\mathcal{P}_{e,i}']\}_{i \in \mathcal{I}_t}
    \gets \text{Estimator}(N,\mathcal{I}_t,\mathcal{W},E_s/N_0)$
    \State $i^\ast \gets \arg\max_{i \in \mathcal{I}_t} \mathbb{E}[\mathcal{P}_{e,i}']$
    \State $\mathcal{I}_{t+1} \gets \mathcal{I}_t \setminus \{i^\ast\}$
    \State $K_{t+1} \gets |\mathcal{I}_{t+1}|$, \quad $t \gets t+1$
\EndWhile
\State \Return $\mathcal{I} \gets \mathcal{I}_t$
\end{algorithmic}
\end{algorithm}

\begin{algorithm}[t]
\caption{Estimator for $\mathbb{E}[\mathcal{P}_{e,i}']$}
\label{alg:estimator}
\label{proc:estimator}
\begin{algorithmic}[1]
\Procedure{Estimator}{$N,\mathcal{I},\mathcal{W},E_s/N_0$}
    \State $\mathcal{T}_{\mathrm{err}}\gets0$, $n\gets0$, $s_i\gets0\;\forall i\in\mathcal{I}$
    \While{$\mathcal{T}_{\mathrm{err}}<\mathcal{T}_{\mathrm{target}}$}
        \State $n\gets n+1$
        \State Generate $y_1^N$ via $\mathcal{W}$ and decode using SC decoding
        \ForAll{$i\in\mathcal{I}$}
            \State $s_i \gets s_i + \mathcal{P}_{e,i}'(y_1^N)$
        \EndFor
        \If{a decoding error occurs}
            \State $\mathcal{T}_{\mathrm{err}}\gets\mathcal{T}_{\mathrm{err}}+1$
        \EndIf
    \EndWhile
    \State \Return $\{s_i/n\}_{i\in\mathcal{I}}$
\EndProcedure
\end{algorithmic}
\end{algorithm}

At iteration $t$, the expectation $\mathbb{E}[\mathcal{P}_{e,i}']$ is approximated using $M$ independent channel output samples $y_{1,d}^N \in \mathcal{Y}^N$, $d=1,\dots,M$, generated by simulating transmissions over $\mathcal{W}$ with a polar code of length $N$, dimension $K_t$, and information set $\mathcal{I}_t$. Since~\eqref{eq:fer} is independent of the transmitted message, an all-zero codeword is assumed. Each sample is decoded using SC decoding, and the resulting LLRs are used to compute $\mathcal{P}_{e,i}'(y_{1,d}^N)$ according to~\eqref{eq:p_prime}. The expectation is then estimated by the empirical mean
\begin{align}\label{eq:er_appr}
\mathbb{E}[\mathcal{P}_{e,i}'] \approx \frac{1}{M} \sum_{d=1}^{M} \mathcal{P}_{e,i}'\!\left(y_{1,d}^N\right).
\end{align}
Sampling continues until convergence, which is determined by observing a predefined number of decoding errors $\mathcal{T}_{\mathrm{target}}$. The approximation procedure is summarized in Algorithm~\ref{alg:estimator}.

After identifying the least reliable bit-channel, the information set is updated as $\mathcal{I}_{t+1} := \{1,\dots,N\}\setminus\mathcal{I}_{t+1}^c$, with code dimension $K_{t+1}=|\mathcal{I}_{t+1}|$. This process is repeated until the target dimension $K$ is reached. The complete construction procedure is summarized in Algorithms~\ref{alg:incremental_polar} and~\ref{alg:estimator}.

\section{Block error rate Under SCL decoding}\label{sec:scl_bep}

In this section, we analyze the BLER of polar codes under SCL decoding over a symmetric B-DMC $\mathcal{W}$. The main objective is to decompose the SCL BLER into interpretable error components and to derive a semi-analytical expression that quantifies the contribution of each information bit to the overall decoding failure. This formulation reveals how individual bit decisions affect path survival during list decoding and provides a principled basis for constructing polar codes optimized for SCL decoding with a given list size.

To this end, consider a \((N,K,\mathcal{I})\) polar code transmitted over a symmetric B-DMC $\mathcal{W}:\mathcal{X}\to\mathcal{Y}$. Let SCL decoding with list size $L=2^m$ be employed, where $m\in [\![1,+\infty]\!]$ (the restriction to powers of two is assumed for simplicity). Given a received sequence $y_1^N\in\mathcal{Y}^N$ corresponding to the transmission of $u_1^N\in\mathcal{X}^N$, the SCL decoder produces a list of $L$ candidate paths 
$\{\hat{u}[l]\}_{l=1}^L$ and selects the final output as
\[
l^*=\arg\min_{l\in[\![1,L]\!]}\mathrm{PM}[l]_N,
\]
where $\mathrm{PM}[l]_N$ denotes the path metric of the $l$-th path, defined in~\eqref{eq:path_metr}. 


Similar to Section~\ref{sec:bep}, for SCL decoding with list size \(L\), we define the random vectors 
\((U_1^N, X_1^N, Y_1^N, \hat{U}[1]_1^N, \ldots, \hat{U}[L]_1^N)\) on the probability space 
\((\mathcal{X}^N \times \mathcal{Y}^N, \mathbb{P})\)~\cite{arikan2009channel}. 
Here, \(U_1^N\) denotes the encoder input, \(X_1^N\) the channel input, 
\(Y_1^N\) the channel output, and \(\hat{U}[l]_1^N\), for 
\(l \in [\![1,L]\!]\), the \(l\)-th decoding path produced by the SCL decoder.  For each realization \((u_1^N, y_1^N) \in \mathcal{X}^N \times \mathcal{Y}^N\), 
the first three vectors take the values \(U_1^N = u_1^N\), 
\(X_1^N = u_1^N \mathbf{G}_N\), and \(Y_1^N = y_1^N\), 
while the decoder outputs \(\hat{U}[l]_1^N\), 
\(l \in [\![1,L]\!]\), are generated according to the SCL decoding rule.

The block error event of a polar code under SCL decoding is then defined as
\begin{align}
\mathcal{E}_{\mathrm{SCL}}
=\Big\{(u_1^N,y_1^N)\in\mathcal{X}^N\times\mathcal{Y}^N:
u_{\mathcal{I}}\neq \hat u[l^*]_{\mathcal{I}}\Big\}.
\end{align}

An SCL decoding error can occur through two distinct mechanisms \cite{piao2023performance}. First, the correct decoding path may be eliminated at some decoding stage due to list pruning, resulting in a \emph{path loss} event. Second, the correct path may survive until the end of decoding but not be selected as the final output, leading to a \emph{path selection} error.
Accordingly, the BLER under SCL decoding can be decomposed as 
\begin{equation}
\label{eq:scl_err_prob}
P(\mathcal{E}_{\mathrm{SCL}})
= P(\mathcal{E}_{\mathrm{PL}}) + P(\mathcal{E}_{\mathrm{PS}}),
\end{equation}
where $\mathcal{E}_{\mathrm{PL}}$ denotes the path loss event and $\mathcal{E}_{\mathrm{PS}}$ denotes the path selection event.

As shown in~\cite{tal2015list,niu2012crc}, CRC-aided SCL decoding can significantly reduce $P(\mathcal{E}_{\mathrm{PS}})$ by assisting the selection of the correct codeword from the surviving list. In contrast, the path loss probability $P(\mathcal{E}_{\mathrm{PL}})$ is fundamentally governed by the structure of the polar code, in particular, the choice of information and frozen bit indices \cite{choi2025sparsely}. Therefore, improving the BLER performance of polar codes under SCL decoding requires reducing both components, with code construction playing a critical role in mitigating path loss.

In the following subsection, we focus on a detailed analysis of $P(\mathcal{E}_{\mathrm{PL}})$ and derive a novel expression that quantifies the contribution of each information bit to the path loss probability and, consequently, to the overall SCL BLER.

\subsection{Path Loss Error Probability}
Consider SCL decoding with list size $L=2^m$ ($m\in[\![1,+\infty]\!]$) applied to
a received vector $y_1^N\in\mathcal{Y}^N$ corresponding to the transmission of
$u_1^N\in\mathcal{X}^N$ over a symmetric B-DMC $\mathcal{W}:\mathcal{X}\to\mathcal{Y}$.
At decoding stage $i$, let
\[
\mathcal{V}_i \triangleq \big\{\hat{u}[l]_1^i : l\in[\![1,L]\!]\big\}
\]
denote the set of length-$i$ prefixes of the $L$ surviving paths. The path loss event is defined as the event that the transmitted codeword is not contained in the final candidate list, i.e.,
\begin{align}
\label{eq:def_epl}
\mathcal{E}_{\mathrm{PL}}
\triangleq
\Big\{(u_1^N,y_1^N)\in\mathcal{X}^N\times\mathcal{Y}^N:
u_1^N\neq \hat u[l]_1^N,\ \forall\, l\in[\![1,L]\!]\Big\}.
\end{align}
Since no pruning occurs before the list reaches size $L$, the correct decoding path can only be eliminated starting from the $m$-th information bit. Let $\mathcal{I}=\{i_1<i_2<\cdots<i_K\}$ denote the ordered information set, and let $i_m$ be the $m$-th information index. For each $i\in\mathcal{I}$ with $i\ge i_m$, define the stage-$i$ path-loss event as
\begin{align}
\label{eq:def_ei_scl}
\mathcal{E}^{\mathrm{SCL}}_i
\triangleq
\Big\{(u_1^N,y_1^N)\in\mathcal{X}^N\times\mathcal{Y}^N:
\mathcal{S}_{i-1},\; u_1^{i}\notin\mathcal{V}_{i}\Big\},
\end{align}
where
\[
\mathcal{S}_{i-1} \triangleq \left\{u_1^j \in \mathcal{V}_j;\; j\in [\![1,i-1]\!]\right\}
\]
denotes the event that the correct decoding path remains among the surviving paths at all decoding stages up to bit $i-1$. The event $\mathcal{E}^{\mathrm{SCL}}_i$ therefore corresponds to the situation in which the correct path survives all pruning operations up to stage $i-1$ but is eliminated at stage $i$ due to list pruning. The events $\{\mathcal{E}^{\mathrm{SCL}}_i\}_{i\in\mathcal{I},\,i\ge i_m}$ are mutually disjoint and their union equals $\mathcal{E}_{\mathrm{PL}}$, hence
\begin{align}
\label{eq:epl_decomp}
\mathcal{E}_{\mathrm{PL}}
=
\bigsqcup_{\substack{i\in\mathcal{I}\\ i\ge i_m}}
\mathcal{E}^{\mathrm{SCL}}_i,
\qquad
P(\mathcal{E}_{\mathrm{PL}})
=
\sum_{\substack{i\in\mathcal{I}\\ i\ge i_m}}
P\!\left(\mathcal{E}^{\mathrm{SCL}}_i\right).
\end{align}
For each such $i$, the probability of the stage-$i$ path-loss event can be
written as
\begin{align}
\label{eq:pei_scl_expect}
P\!\left(\mathcal{E}^{\mathrm{SCL}}_i\right)&=
\sum_{y_1^N\in\mathcal{Y}^N}
\Pr(Y_1^N = y_1^N) \notag\\
& \cdot \sum_{u_1^N\in\mathcal{X}^N}
\frac{1}{2^N}
\Pr\!\left(u_1^{i}\notin\mathcal{V}_{i},\,\mathcal{S}_{i-1}\mid Y_1^N = y_1^N\right) \notag\\
&\stackrel{(a)}{=}
\sum_{y_1^N\in\mathcal{Y}^N}
\Pr(Y_1^N = y_1^N)\,
\mathcal{P}^{\mathrm{SCL}}_{e,i}(y_1^N),
\end{align}
where equality $(a)$ follows from the fact that $\Pr\!\left(u_1^{i}\notin\mathcal{V}_{i},\,\mathcal{S}_{i-1}\mid Y_1^N = y_1^N\right)$ is independent of the transmitted sequence $u_1^N$ for symmetric B-DMCs. This property can be established using arguments analogous to those employed to show the message-independence of the SC bit-error probability $\mathcal{P}_e(u_1^i\mid  y_1^N)$.
For notational convenience, in (\ref{eq:pei_scl_expect}), we define \(\mathcal{P}^{\mathrm{SCL}}_{e,i}(y_1^N)
\triangleq
\Pr\!\left(u_1^{i}\notin\mathcal{V}_{i},\,\mathcal{S}_{i-1}\mid Y_1^N = y_1^N\right),\)
which represents the conditional probability that the correct decoding path is eliminated at stage $i$, given the observation $y_1^N$.

Following~\cite{afisiadis2014low,shen2021dynamic}, $\mathcal{P}^{\mathrm{SCL}}_{e,i}(y_1^N)$ admits the product form
\begin{align}
\label{eq:pei_scl_factor}
\mathcal{P}^{\mathrm{SCL}}_{e,i}(y_1^N)
=
\mathcal{P}^{(i)}_{e}(y_1^N)
\prod_{\substack{j\in\mathcal{I}\\ i_m\le j < i}}
\bigl(1-\mathcal{P}^{(j)}_{e}(y_1^N)\bigr),
\end{align}
where, for $j\in\mathcal{I}$,
\[
\mathcal{P}^{(j)}_{e}(y_1^N)
\triangleq
\Pr\!\left(u_1^{j}\notin\mathcal{V}_{j}\mid Y_1^N = y_1^N,\,\mathcal{S}_{j-1}\right)
\]
denotes the conditional probability that the correct path is eliminated at stage $j$, given that it has survived all previous pruning stages. This probability can be approximated as
\begin{align}
\label{eq:Pej_scl_approx}
\mathcal{P}^{(j)}_{e}(y_1^N)
\approx
\frac{1}{1+\exp(\mathbb{L}_j)},
\end{align}
where the reliability metric $\mathbb{L}_j$ is computed from the path metrics at
stage $j$ as
\begin{align}
\label{eq:scl_llr}
\mathbb{L}_j
\triangleq
\ln\!\left(
\frac{\sum_{l=1}^{L}\exp\!\big(-\mathrm{PM}[l]_j\big)}
     {\sum_{l=L+1}^{2L}\exp\!\big(-\mathrm{PM}[l]_j\big)}
\right),
\end{align}
assuming that the $2L$ candidate path metrics
$\{\mathrm{PM}[l]_j\}_{l=1}^{2L}$ are sorted such that
$l\in[\![1,L]\!]$ corresponds to the surviving paths and
$l\in[\![L+1,2L]\!]$ to the discarded paths.

Note that, in the definitions of $\mathbb{L}_j$ in~\eqref{eq:scl_llr} and $\mathcal{E}^{\mathrm{SCL}}_i$ in~\eqref{eq:def_ei_scl}, a genie-aided assumption is implicitly made, namely that the correct prefix $u_1^{j-1}$ is retained during path pruning and remains among the surviving paths from stage $1$ to stage $j-1$. In practice, however, SCL decoding has no access to the true transmitted path and retains candidate paths solely based on their path metrics, which are computed using the decoder’s own previous decisions. Therefore, to characterize the PL error probability of an $(N,K,\mathcal{I})$ polar code under practical SCL decoding, it is necessary to replace the genie-aided formulation with a tractable bound that does not condition on the survival of the correct path and depends only on the decoder’s observed list evolution. To this end, we introduce a superevent $\mathcal{B}^{\mathrm{SCL}}_i$ of $\mathcal{E}^{\mathrm{SCL}}_i$, defined as
\begin{align}
\label{eq:def_bi_scl}
\mathcal{B}^{\mathrm{SCL}}_i
\triangleq
\Big\{(u_1^N,y_1^N)\in\mathcal{X}^N\times\mathcal{Y}^N:\;
\mathcal{S}'_{i-1},\; u_1^{i}\notin\mathcal{V}_{i}\Big\},
\end{align}
where $\mathcal{S}'_{i-1}$ is obtained from $\mathcal{S}_{i-1}$ by removing the
condition that the correct path belongs to the surviving set, i.e.,
\[
\mathcal{S}'_{i-1} \triangleq \left\{\mathcal{V}_j: \; j \in [\![1,i-1]\!]\right\}.
\]
Therefore, $\mathcal{E}^{\mathrm{SCL}}_i \subseteq \mathcal{B}^{\mathrm{SCL}}_i$, and 
\begin{align}
\label{eq:epl_bound}
P(\mathcal{E}_{\mathrm{PL}})
=
\sum_{\substack{i\in\mathcal{I}\\ i\ge i_m}}
P\!\left(\mathcal{E}^{\mathrm{SCL}}_i\right)
\le
\sum_{\substack{i\in\mathcal{I}\\ i\ge i_m}}
P\!\left(\mathcal{B}^{\mathrm{SCL}}_i\right).
\end{align}
For each $i$, the probability of $\mathcal{B}^{\mathrm{SCL}}_i$ can be written as
\begin{align}
\label{eq:bi_expect}
&P\!\left(\mathcal{B}^{\mathrm{SCL}}_i\right) \notag\\
&=
\sum_{y_1^N\in\mathcal{Y}^N} 
\Pr(Y_1^N = y_1^N)\,
\Pr\!\left(u_1^i\notin\mathcal{V}_i,\,
\mathcal{S}'_{i-1}\mid Y_1^N = y_1^N\right) \notag\\
&\triangleq
\sum_{y_1^N\in\mathcal{Y}^N}
\Pr(Y_1^N = y_1^N)\,\mathcal{P}^{\prime\mathrm{SCL}}_{e,i}(y_1^N),
\end{align}
where
\begin{align*}
&\mathcal{P}^{\prime\mathrm{SCL}}_{e,i}(y_1^N) \\
&\triangleq
\Pr\!\left(u_1^i\notin\mathcal{V}_i,\,
\mathcal{S}'_{i-1}\mid Y_1^N = y_1^N\right) \\
&=
\Pr\!\left(u_1^i\notin\mathcal{V}_i \mid Y_1^N = y_1^N,\,
\mathcal{S}'_{i-1}\right)
\Pr\!\left(\mathcal{S}'_{i-1}\mid Y_1^N = y_1^N\right).
\end{align*}
The first term can be approximated by
\[
\Pr\!\left(u_1^i\notin\mathcal{V}_i \mid Y_1^N = y_1^N,\,
\mathcal{S}'_{i-1}\right)
\approx
\frac{1}{1+\exp(\mathbb{L}_i)},
\]
while the second term admits the factorization
\begin{align}
\label{eq:correct_path_trajectory}
\Pr\!\left(\mathcal{S}'_{i-1}\mid Y_1^N = y_1^N\right)
=
\prod_{j=1}^{i-1}
\Pr\!\left(\mathcal{V}_j \mid Y_1^N = y_1^N,\,
\mathcal{S}'_{j-1}\right).
\end{align}
Moreover, for each $j$,
\[
\Pr\!\left(\mathcal{V}_j \mid Y_1^N = y_1^N,\,
\mathcal{S}'_{j-1}\right)
\approx
\frac{1}{1+\exp(-\mathbb{L}_j)},
\]
where
\[
\mathbb{L}_j
=
\ln\!\left(
\frac{\sum_{l=1}^{L}\exp\!\big(-\mathrm{PM}[l]_j\big)}
     {\sum_{l=L+1}^{2L}\exp\!\big(-\mathrm{PM}[l]_j\big)}
\right)
\]
quantifies the reliability of the decoder decision at stage $j$. Larger values of $\mathbb{L}_j$ indicate a higher likelihood that the correct path remains in the surviving list. Finally, since no pruning occurs for frozen indices or for information indices $j<i_m$, we have
$\Pr(\mathcal{V}_j \mid Y_1^N = y_1^N,\mathcal{S}'_{j-1})=1$ in these cases, which yields
\[
\Pr\!\left(\mathcal{S}'_{i-1}\mid Y_1^N = y_1^N\right)
\approx
\prod_{\substack{j\in\mathcal{I}\\ i_m\le j<i}}
\frac{1}{1+\exp(-\mathbb{L}_j)}.
\]
Consequently, for a given received vector $y_1^N$, the probability that the
correct path is eliminated at decoding stage $i$ under practical SCL decoding
can be approximated as
\begin{align}\label{eq:pp_i_SCL}
    \mathcal{P}^{\prime\mathrm{SCL}}_{e,i}(y_1^N)
=
\frac{1}{1+\exp(\mathbb{L}_i)}
\prod_{\substack{j\in\mathcal{I}\\ i_m\le j<i}}
\frac{1}{1+\exp(-\mathbb{L}_j)}.
\end{align}
Based on the preceding analysis of the PL error probability for an $(N,K,\mathcal{I})$ polar code under both genie-aided and practical SCL decoding over a symmetric B-DMC $\mathcal{W}:\mathcal{X}\to\mathcal{Y}$, we now state the following proposition, which characterizes the BLER of polar codes under genie-aided and practical SCL decoding.

\begin{proposition}\label{prop:scl_blck}
The BLER of \((N,K, \mathcal{I})\) polar code, where \(\mathcal{I}=\{i_1,i_2,\cdots,i_K\}\) with \(i_1<i_2<\cdots<i_K\), under an SCL decoding, with list size \(L=2^m\) with \(m\in [\![1,+\infty]\!]\), over a symmetric B-DMC  \(\mathcal{W}: \mathcal{X} \to \mathcal{Y}\) satisfies 

\begin{align}
&P(\mathcal{E}_{\mathrm{SCL}}) 
= P(\mathcal{E}_{\mathrm{PL}})+P(\mathcal{E}_{\mathrm{PS}}) \notag\\
&= \sum_{\substack{i\in\mathcal{I}\\ i\ge i_m}}
\sum_{y_1^N\in\mathcal{Y}^N} \Pr(Y_1^N = y_1^N)\,\mathcal{P}^{\mathrm{SCL}}_{e,i}(y_1^N)
+ P(\mathcal{E}_{\mathrm{PS}}) \notag\\
&\le \sum_{\substack{i\in\mathcal{I}\\ i\ge i_m}}
\sum_{y_1^N\in\mathcal{Y}^N} \Pr(Y_1^N = y_1^N)\,\mathcal{P}^{\prime\mathrm{SCL}}_{e,i}(y_1^N)
+ P(\mathcal{E}_{\mathrm{PS}}),
\end{align}
where $i_m\in\mathcal{I}$ denote the $m$-th information index.
\end{proposition}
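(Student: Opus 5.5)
The proof is an assembly of the decomposition results derived just before the statement, in the same way Proposition~\ref{pro:fer_form} assembles the SC results. There are three steps: establish the first equality (path loss versus path selection), rewrite $P(\mathcal{E}_{\mathrm{PL}})$ as a sum of stage-wise terms, and bound each term by its practical-decoder counterpart.

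\textbf{Step 1: the decomposition~\eqref{eq:scl_err_prob}.} Partition $\mathcal{E}_{\mathrm{SCL}}$ according to whether the transmitted $u_1^N$ belongs to the final list $\{\hat u[l]_1^N\}_{l=1}^L$. If it does not, the pair lies in $\mathcal{E}_{\mathrm{PL}}$ as defined in~\eqref{eq:def_epl}. Conversely, every element of $\mathcal{E}_{\mathrm{PL}}$ is a block error, because $\hat u[l^*]$ is a list element and hence differs from $u_1^N$; since frozen coordinates agree, it differs on $\mathcal{I}$. If $u_1^N$ does lie in the list but $\hat u[l^*]_{\mathcal{I}}\neq u_{\mathcal{I}}$, we are in the path-selection event $\mathcal{E}_{\mathrm{PS}}$. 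The two cases are disjoint, so $P(\mathcal{E}_{\mathrm{SCL}})=P(\mathcal{E}_{\mathrm{PL}})+P(\mathcal{E}_{\mathrm{PS}})$.

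\textbf{Step 2: the stage-wise equality.} Invoke~\eqref{eq:epl_decomp}. The correct path fails to survive to stage $N$ exactly when there is a first stage $i$ at which $u_1^i\notin\mathcal{V}_i$. That stage must be an information index, because frozen stages perform no pruning. It must also satisfy $i\ge i_m$, because before the $m$-th information bit the list holds at most $2^{m-1}<L$ candidates after extension, so nothing is pruned. These first-loss events are disjoint and cover $\mathcal{E}_{\mathrm{PL}}$.

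Next, apply~\eqref{eq:pei_scl_expect} to write each $P(\mathcal{E}^{\mathrm{SCL}}_i)$ as $\sum_{y_1^N}\Pr(Y_1^N=y_1^N)\,\mathcal{P}^{\mathrm{SCL}}_{e,i}(y_1^N)$. This step needs the message independence of $\Pr(u_1^i\notin\mathcal{V}_i,\mathcal{S}_{i-1}\mid y_1^N)$, and I expect that to be the main obstacle. I would prove it by mirroring Lemmas~\ref{prop:sym_m} and~\ref{prop:ind_u}:
\begin{itemize}
\item Fix $a_1^N$ and set $b_1^N=a_1^N\mathbf{G}_N$.
\item Using channel symmetry and the linearity of $\mathbf{G}_N$, show that replacing $y_1^N$ by $b_1^N\cdot y_1^N$ changes every decision LLR along any path $v_1^{j-1}$ into the corresponding LLR along $v_1^{j-1}\oplus a_1^{j-1}$, with its sign flipped exactly when $a_j=1$.
\item Conclude that every path metric $\mathrm{PM}[l]_j$ in~\eqref{eq:path_metr} is preserved under the relabelling $v\mapsto v\oplus a$.
\item Since the pruning depends only on the path metrics, the surviving sets satisfy $\mathcal{V}_j(b\cdot y)=\mathcal{V}_j(y)\oplus a_1^j$ at every stage $j$.
\item Membership of the correct prefix at every stage is therefore preserved, and summing over $y_1^N$ with the bijection $y\mapsto b\cdot y$ gives the independence from $u_1^N$, as in Lemma~\ref{prop:ind_u}.
\end{itemize}
One technical point is that ties in the sorting must be broken in a translation-covariant way; otherwise the sets $\mathcal{V}_j$ need not transform as claimed. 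I would either note that ties have probability zero or fix such a tie-breaking rule.

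\textbf{Step 3: the inequality.} Use the inclusion $\mathcal{E}^{\mathrm{SCL}}_i\subseteq\mathcal{B}^{\mathrm{SCL}}_i$, which holds because $\mathcal{S}'_{i-1}$ drops the survival constraint from $\mathcal{S}_{i-1}$. Monotonicity of probability then gives~\eqref{eq:epl_bound}, and~\eqref{eq:bi_expect} expresses each $P(\mathcal{B}^{\mathrm{SCL}}_i)$ through $\mathcal{P}^{\prime\mathrm{SCL}}_{e,i}$. The same symmetry argument from Step 2 shows this quantity is also message independent. Adding $P(\mathcal{E}_{\mathrm{PS}})$ to both sides completes the chain stated in the proposition.

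The LLR-based product forms~\eqref{eq:pei_scl_factor} and~\eqref{eq:pp_i_SCL} are not needed for the statement itself; they only enter the numerical evaluation of the summands.
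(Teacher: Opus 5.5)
Your proposal is correct and takes the paper's route. The paper's proof is just ``follows from the preceding analysis,'' i.e., the chain \eqref{eq:scl_err_prob}, \eqref{eq:epl_decomp}, \eqref{eq:pei_scl_expect}, \eqref{eq:epl_bound}, \eqref{eq:bi_expect} that you assemble, and your relabelling argument fills in the message-independence step the paper only asserts by analogy with Lemmas~\ref{prop:sym_m} and~\ref{prop:ind_u}. One small fix there: the identity $\mathcal{V}_j(b_1^N\cdot y_1^N)=\mathcal{V}_j(y_1^N)\oplus a_1^j$ fails at frozen positions for arbitrary $a_1^N$, so restrict to $a_{\mathcal{I}^c}=0$, which suffices because two messages with the same frozen part differ only on $\mathcal{I}$; alternatively, shift the decoder's frozen values by $a_{\mathcal{I}^c}$ as well.
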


\begin{proof}
The result follows directly from the preceding analysis.
\end{proof}

Having derived a new formulation for the PL error probability in terms of the bit-wise path-elimination probabilities, we are now in a position to leverage this characterization to construct polar codes optimized for SCL decoding. The resulting SCL-oriented code construction, which aims to minimize the PL error probability, is presented in the next section.

\section{Polar Code Construction Optimized for SCL Decoding}
\label{sec:scl_cons}

In this section, we address the construction of an $(N,K)$ polar code optimized for SCL decoding over arbitrary symmetric B-DMCs \(\mathcal{W}:\mathcal{X} \to \mathcal{Y}\). The goal of polar code construction is to identify an information set $\mathcal{I}\subseteq[\![1,N]\!]$ that minimizes the BLER of the resulting code, i.e.,
\[
\mathcal{I}
=
\arg\min_{\substack{\mathcal{I}\subset[\![1,N]\!]\\|\mathcal{I}|=K}}
P(\mathcal{E}_{\mathrm{SCL}}).
\]
Under SCL decoding, this problem can be equivalently expressed as
\begin{align}
\mathcal{I}
&=
\arg\min_{\substack{\mathcal{I}\subset[\![1,N]\!]\\|\mathcal{I}|=K}}
\bigl[
P(\mathcal{E}_{\mathrm{PL}})
+
P(\mathcal{E}_{\mathrm{PS}})
\bigr].
\end{align}

This optimization problem is inherently combinatorial, and an exhaustive search over all candidate information sets becomes infeasible even for moderate block lengths. Since the PS error probability can be effectively reduced by employing a properly designed outer code (e.g., CRC~\cite{niu2012crc} or PAC~\cite{arikan2019sequential,moradi2021monte}), which preserves the minimum distance and, through an appropriate pre-transformation, reduces the number of minimum-distance codewords and increases the likelihood of selecting the correct codeword from the decoder’s list \cite{li2019pre}, this paper focuses on an SCL-optimized polar code construction that specifically aims to minimize the PL error probability.

\vspace{1mm}
\begin{algorithm}[b]
\caption{SCL-Optimized Polar Code Construction}
\label{alg:scl_cons}
\begin{algorithmic}[1]
\State \textbf{Input:} $N$, $K$, channel $\mathcal{W}$, list size $L$, target $E_s/N_0$
\State \textbf{Output:} Information set $\mathcal{I}$
\State Initialize $\mathcal{I}_0\gets[\![1,N]\!]$,
$K_0\gets N$, $t\gets 0$
\While{$K_t>K$}
    \State Estimate $\{\mathbb{E}[\mathcal{P}_{e,i}^{\prime\mathrm{SCL}}]\}_{i\in\mathcal{I}_t}$
    \State $i^* \gets \arg\max_{i\in\mathcal{I}_t}
    \mathbb{E}[\mathcal{P}_{e,i}^{\prime\mathrm{SCL}}]$
    \State $\mathcal{I}_{t+1}\gets\mathcal{I}_t\setminus\{i^*\}$
    \State $K_{t+1}\gets|\mathcal{I}_{t+1}|$, $t\gets t+1$
\EndWhile
\State \Return $\mathcal{I}\gets\mathcal{I}_t$
\end{algorithmic}
\end{algorithm}

The main objective of this work is to identify an information set that minimizes the PL error probability for a given SCL decoder with a specified list size. To this end, we adopt an incremental construction strategy analogous to Algorithm~\ref{alg:incremental_polar} to determine the information set $\mathcal{I}$, to reduce $P(\mathcal{E}_{\mathrm{PL}})$. A high-level description of the resulting SCL-oriented construction is provided in Algorithm~\ref{alg:scl_cons}. Algorithm~\ref{alg:scl_cons} incrementally freezes the least reliable bits,
thereby approximating the solution to the optimization problem
\begin{align*}
\mathcal{I}^*
=
\arg\min_{\substack{\mathcal{I}\subset[\![1,N]\!]\\|\mathcal{I}|=K}}
P(\mathcal{E}_{\mathrm{PL}}).
\end{align*}
Using the PL formulation derived in Section~\ref{sec:scl_bep}, this objective can be written as
\begin{align}
\label{eq:scl_pl_opt}
\mathcal{I}^*
=
\arg\min_{\substack{\mathcal{I}\subset[\![1,N]\!]\\|\mathcal{I}|=K}}
\sum_{i\in\mathcal{I}}
\sum_{y_1^N\in\mathcal{Y}^N}
\Pr(Y_1^N = y_1^N)\,\mathcal{P}_{e,i}^{\prime\mathrm{SCL}}(y_1^N),
\end{align}
where $\mathcal{P}_{e,i}^{\prime\mathrm{SCL}}(y_1^N)$ denotes the probability that the correct path is eliminated at decoding stage $i$, conditioned on the channel output $y_1^N$. The objective in~\eqref{eq:scl_pl_opt} constitutes an upper bound on $P(\mathcal{E}_{\mathrm{PL}})$, as it also includes indices $i \le i_m$, where $i_m$ denotes the $m$-th information bit. For these indices, no pruning occurs, and the denominator in~\eqref{eq:scl_llr} is undefined. In this case, the denominator is set to a small constant $\epsilon$, reflecting the negligible probability of eliminating the correct path at stages $i \le i_m$.  Furthermore, in computing $\mathcal{P}_{e,i}^{\prime\mathrm{SCL}}(y_1^N)$ in~\eqref{eq:pp_i_SCL}, the LLR values are clipped to the range $[-\mathbb{L}_{\max},\,\mathbb{L}_{\max}]$ to ensure numerical stability, with $\mathbb{L}_{\max}=30$ used throughout Algorithm~\ref{alg:scl_cons}.

For notational convenience, we define the PL error probability of each bit $i\in\mathcal{I}$ as
\begin{align}\label{eq:bit_err_scl}
\mathbb{E}\!\left[\mathcal{P}_{e,i}^{\prime\mathrm{SCL}}\right]
\triangleq
\sum_{y_1^N\in\mathcal{Y}^N}
\Pr(Y_1^N =y_1^N)\,\mathcal{P}_{e,i}^{\prime\mathrm{SCL}}(y_1^N).
\end{align}
The SCL-optimized polar code construction problem can then be expressed as
\begin{align*}
\mathcal{I}^*
=
\arg\min_{\substack{\mathcal{I}\subset[\![1,N]\!]\\|\mathcal{I}|=K}}
\sum_{i\in\mathcal{I}}
\mathbb{E}\!\left[\mathcal{P}_{e,i}^{\prime\mathrm{SCL}}\right].
\end{align*}
Algorithm~\ref{alg:scl_cons} formulates an SCL-optimized polar code construction and takes as input the code length $N$, target dimension $K$, list size $L$, channel \(\mathcal{W}\), and design $E_s/N_0$. It initializes a rate-one polar code with $\mathcal{I}_0\gets[\![1,N]\!]$ and $K_0=N$. At iteration $t$, SCL decoding with list size $L$ is employed to estimate \(\mathbb{E}\!\left[\mathcal{P}_{e,i}^{\prime\mathrm{SCL}}\right]\) for \(i \in \mathcal{I}_t\). Specifically, $M$ channel outputs $\{y_{1,d}^N\}_{d=1}^M$ are generated by transmitting a $(N,K_t,\mathcal{I}_t)$ polar code over the symmetric B-DMC $\mathcal{W}$. For each $i\in\mathcal{I}_t$, the quantity $\mathcal{P}_{e,i}^{\prime\mathrm{SCL}}(y_{1,d}^N)$ is computed using \eqref{eq:pp_i_SCL}, and the expectation
$\mathbb{E}\!\left[\mathcal{P}_{e,i}^{\prime\mathrm{SCL}}\right]$ is approximated as
\[
\mathbb{E}\!\left[\mathcal{P}_{e,i}^{\prime\mathrm{SCL}}\right]
\approx
\frac{1}{M}\sum_{d=1}^M
\mathcal{P}_{e,i}^{\prime\mathrm{SCL}}(y_{1,d}^N).
\]
These values serve as reliability metrics that quantify the likelihood that the correct path is eliminated at the decoding of the $i$-th information bit. To approximately solve~\eqref{eq:scl_pl_opt}, the information bit with the largest \(\mathbb{E}\!\left[\mathcal{P}_{e,i}^{\prime\mathrm{SCL}}\right]\) is identified and frozen, i.e.,
\[
i^*
=
\arg\max_{i\in\mathcal{I}_t}
\mathbb{E}\!\left[\mathcal{P}_{e,i}^{\prime\mathrm{SCL}}\right],
\qquad
\mathcal{I}_{t+1}\gets\mathcal{I}_t\setminus\{i^*\},
\]
with the code dimension updated as $K_{t+1}\gets|\mathcal{I}_{t+1}|$. This
procedure is repeated until the target dimension is reached, i.e., $K_t=K$.



\begin{figure*}[t]
    \centering
    \centering
    \includegraphics[width=0.9\textwidth]{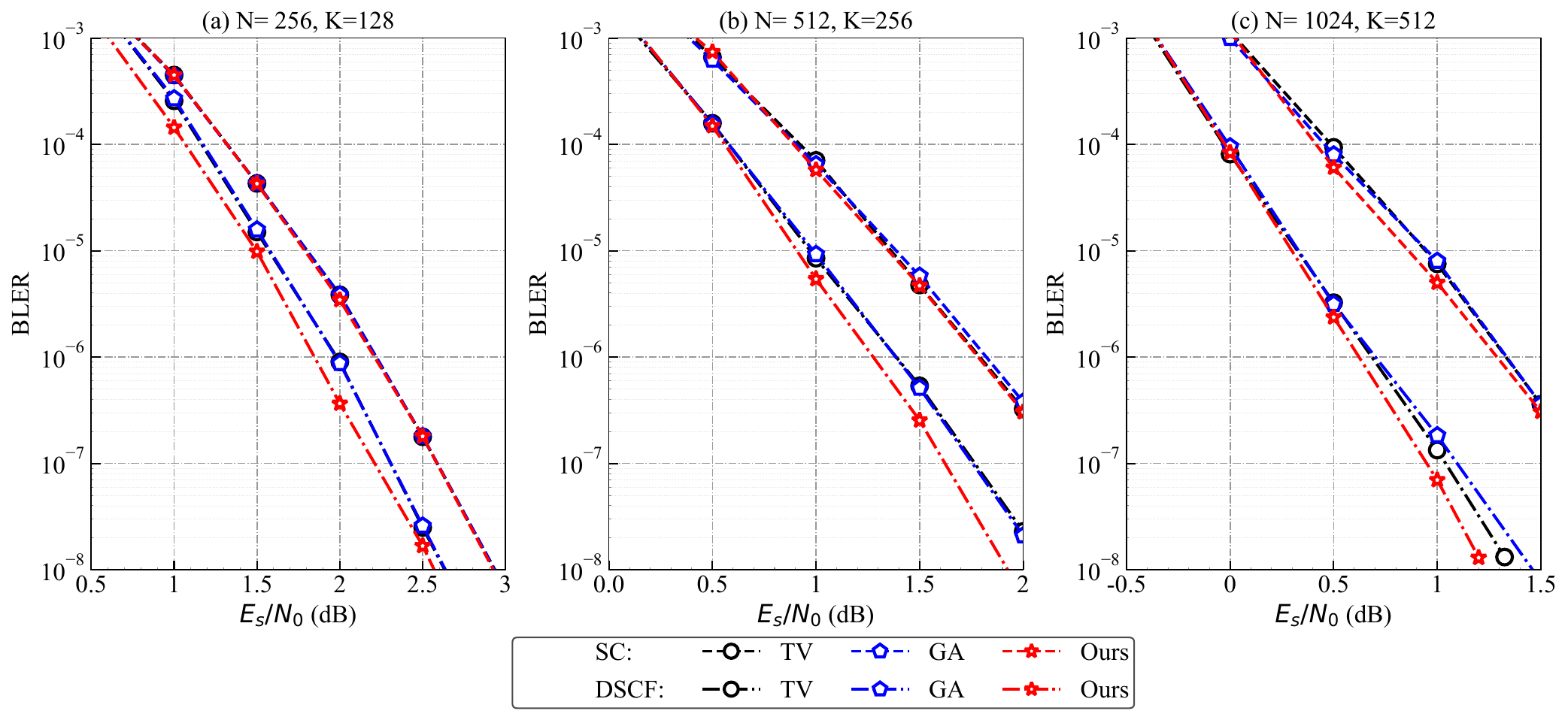}
    \caption{BLER comparison of the proposed polar codes optimized for SC decoding and benchmark designs over an AWGN channel. Subfigures~(a)–(c) correspond to $(N,K)=(256,128)$, $(512,256)$, and $(1024,512)$, respectively. The proposed codes are compared with GA-based and Tal-Vardy (TV) constructions under SC and DSCF decoding. For DSCF, a bit-flip order $\omega=1$ is used with up to $T=10$ attempts for $(256,128)$ and $T=8$ attempts for larger block lengths, employing 16-bit and 24-bit CRCs, respectively.}
    \label{fig:P256_128}
\end{figure*}

\section{Numerical Results}\label{sec:sim}

This section evaluates the performance of the proposed polar code construction algorithms. Subsection~\ref{subsec:perf} compares their error-correction performance with existing designs, including GA-based~\cite{trifonov2012efficient}
and Tal--Vardy~\cite{tal2013construct} constructions for SC decoding, and RM-polar~\cite{li2014rm}, GA-based~\cite{trifonov2012efficient}, and 5G polar codes~\cite{bioglio2020design} for SCL decoding. Subsection~\ref{subsec:cons}
illustrates the resulting information and frozen bit distributions, while Subsection~\ref{subsec:comp} compares the computational complexity of the proposed methods with existing approaches.

\subsection{Performance Analysis}\label{subsec:perf}


We first evaluate polar codes constructed using Algorithm~\ref{alg:incremental_polar}, optimized for SC decoding, and compare
their performance with established SC-optimized designs. We then assess the proposed SCL-optimized construction by benchmarking it against 5G, GA-based, and RM-polar polar codes. All simulations assume transmission of the all-zero codeword and are run until at least 100 block errors are observed. The design $E_s/N_0$ values for benchmark codes are chosen according to the literature to ensure competitive performance in the BLER range of $10^{-3}$ to $10^{-8}$. SCL decoding simulations are
performed using the AFF3CT C++ toolbox~\cite{cassagne2019aff3ct}.

\subsubsection{Performance of Polar Codes Optimized for SC Decoding}
To evaluate the effectiveness of the proposed polar code construction algorithm optimized for SC decoding, we apply it to design polar codes for communication over both AWGN and BSC channels. The resulting codes are compared with established SC-optimized constructions, including the Tal and Vardy (TV) polar codes~\cite{tal2013construct} and GA-based polar codes~\cite{trifonov2012efficient}. The TV polar codes are optimized at \(E_s/N_0 = 0\)~dB, as reported in~\cite{channelcodes}. The performance of the proposed and benchmark codes is evaluated under both SC and DSCF decoding~\cite{chandesris2018dynamic}. For DSCF decoding, the bit-flip order is set to \(\omega = 1\), with a maximum of \(T = 10\) additional decoding attempts for the \((256,128)\) polar code and \(T = 8\) for the \((512,256)\) and \((1024,512)\) codes. Additional decoding attempts in DSCF are triggered only when the initial SC decoding fails the CRC check, using the 5G 16-bit CRC for the \((256,128)\) code and the 5G 24-bit CRC for the \((512,256)\) and \((1024,512)\) codes.

We first consider a $(256,128)$ polar code designed at $E_s/N_0=1.2$~dB and compare it with a TV code designed at $E_s/N_0=0$~dB~\cite{channelcodes} and a GA-based code designed at $E_s/N_0=3$~dB. As shown in Fig.~\ref{fig:P256_128}(a), the proposed construction achieves performance comparable to the benchmark designs under SC decoding. Under DSCF decoding, however, it consistently outperforms both baselines across the entire BLER range, yielding up to a $0.15$~dB gain at $\text{BLER}=10^{-6}$.

\begin{figure*}[t]
    \centering
    \centering
    \includegraphics[width=0.55\textwidth]{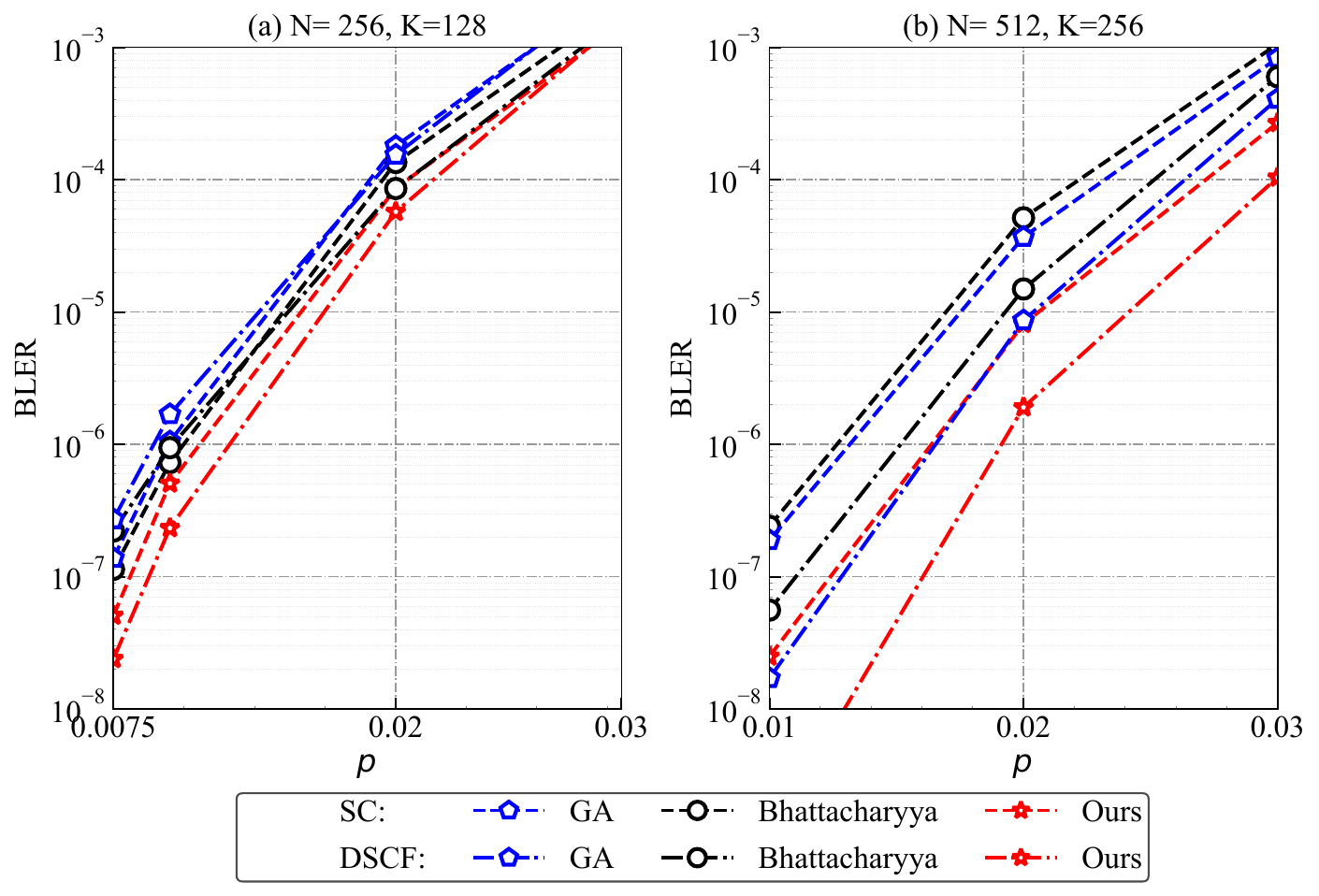}
    \caption{BLER comparison of the proposed polar codes optimized for SC decoding and benchmark designs over a BSC channel. Subfigure~(a) shows results for $(N,K)=(256,128)$, where the proposed code is compared with GA-based and Bhattacharyya-based polar codes~\cite{arikan2009channel} under SC and DSCF decoding ($\omega=1$, $T=10$) using a 16-bit CRC. Subfigure~(b) reports results for $(N,K)=(512,256)$, where the proposed code is compared with the same benchmarks under SC and DSCF decoding ($\omega=1$, $T=10$) using a 24-bit CRC.}
    \label{fig:fer_bsc}
\end{figure*}

\begin{figure*}[t]
    \centering
    \includegraphics[width=0.9\textwidth]{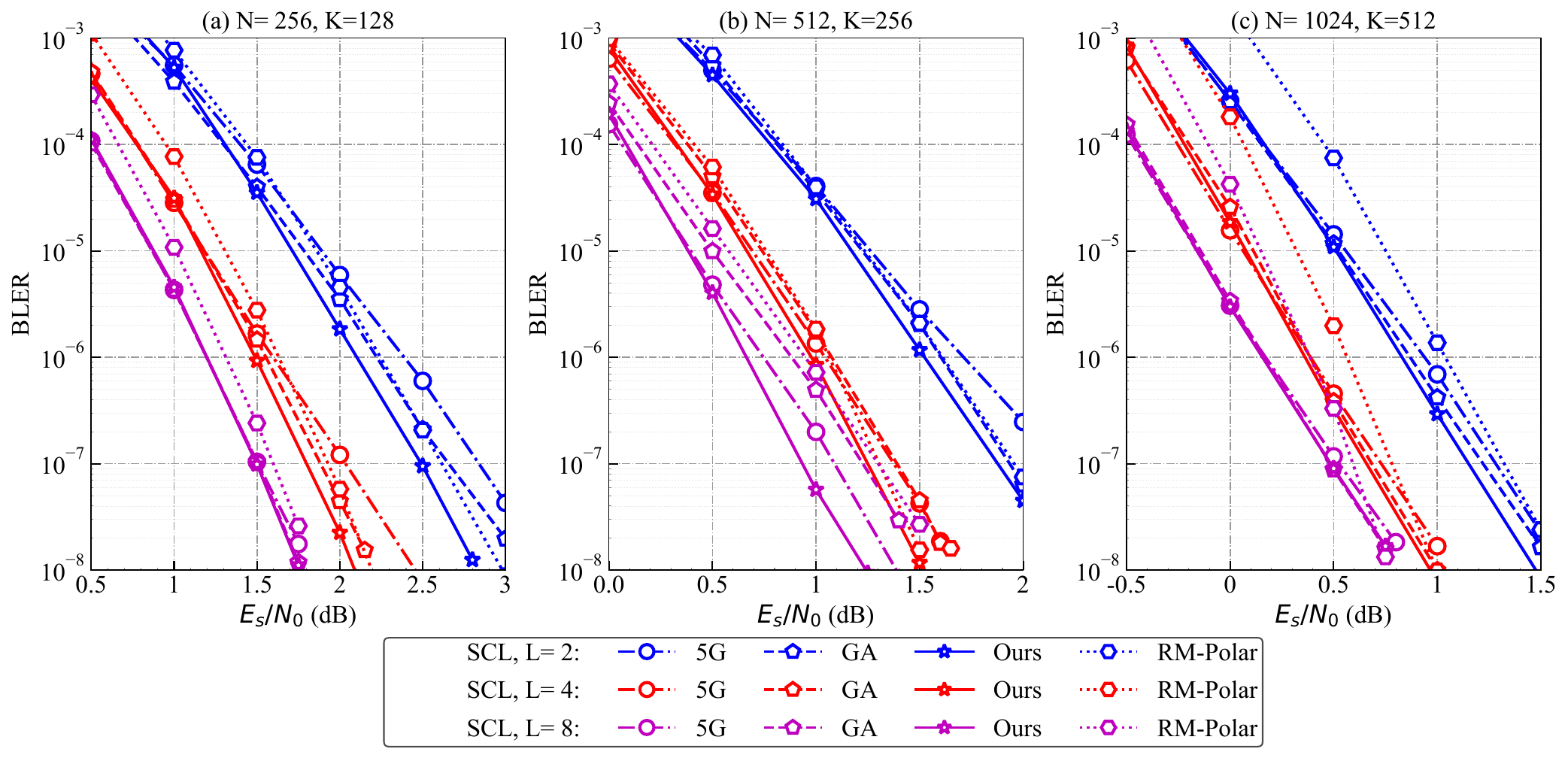}

\caption{BLER comparison of the proposed polar codes optimized for SCL decoding and benchmark designs over an AWGN channel. Subfigure~(a) shows results for $(N,K)=(256,128)$ with a 16-bit CRC, where the proposed codes are compared with
GA-based, RM-polar, and 5G polar codes under SCL decoding with $L=2,4,8$. Subfigure~(b) reports results for $(N,K)=(512,256)$ with a 24-bit CRC, where the proposed codes are compared with GA-based, RM-polar, and 5G polar codes. Subfigure~(c) shows results for $(N,K)=(1024,512)$ with a 24-bit CRC, where the proposed codes are compared with GA-based, RM-polar, and 5G polar codes under SCL decoding with $L=2,4,8$.}
    \label{fig:scl_fer}
\end{figure*}

Next, we evaluate a $(512,256)$ polar code designed at $E_s/N_0=0.5$~dB.
Fig.~\ref{fig:P256_128}(b) shows that, under SC decoding, the proposed code
maintains performance comparable to the TV and GA-based designs, in agreement
with the $(256,128)$ results. Under DSCF decoding, the proposed construction
provides a clear advantage, achieving up to a $0.2$~dB gain at
$\text{BLER}=2\times10^{-8}$.

Finally, Fig.~\ref{fig:P256_128}(c) reports the results for a $(1024,512)$ polar
code designed at $E_s/N_0=0$~dB. Under SC decoding, the proposed construction
matches the performance of the TV and GA-based codes over the entire simulated
range, confirming its scalability to larger block lengths. Under DSCF decoding,
it again delivers superior performance, achieving up to a $0.2$~dB gain at
$\text{BLER}=10^{-7}$ using the same DSCF configuration ($\omega=1$, $T=8$).

To further assess the proposed construction across different channel models, we
apply the algorithm to design polar codes for transmission over a BSC. As in the
AWGN case, decoding is performed using both SC and DSCF, with $\omega=1$ and
$T=10$ for all configurations. 
We first consider a $(256,128)$ polar code designed for a BSC with crossover
probability $p=0.02$, and compare it with a GA-based code designed at
$E_s/N_0=1.2$~dB and the Bhattacharyya-based construction of~\cite{arikan2009channel}
designed for $p=0.008$. As shown in Fig.~\ref{fig:fer_bsc}(a), the proposed
construction outperforms both benchmarks under SC and DSCF decoding, with
particularly pronounced gains under DSCF.

We then evaluate a $(512,256)$ polar code designed for $p=0.025$, compared
against GA-based and Bhattacharyya-based codes designed at $E_s/N_0=0$~dB and
$p=0.015$, respectively. Fig.~\ref{fig:fer_bsc}(b) shows that the proposed
construction again achieves the best BLER performance, especially under DSCF
decoding. These results highlight the flexibility of the proposed algorithm and
its effectiveness in constructing polar codes across different channel models.

\subsubsection{Performance of Polar Codes Optimized for SCL Decoding} 
To evaluate the effectiveness of the proposed SCL-optimized construction
(Algorithm~\ref{alg:scl_cons}), we benchmark the resulting polar codes against
state-of-the-art SCL-oriented designs, including 5G polar codes~\cite{bioglio2020design},
RM--polar codes~\cite{li2014rm}, and GA-based polar codes. For fair comparison,
the benchmark codes are designed at $E_s/N_0$ values selected to ensure strong
performance over the simulated BLER range. SCL decoding is evaluated for list
sizes $L=2,4,8$, using a 16-bit 5G CRC for $(256,128)$ and a 24-bit 5G CRC for
$(512,256)$ and $(1024,512)$.

We first consider a $(256,128)$ polar code. The proposed codes are designed at
$E_s/N_0=1.2$~dB for $L=2$ and $L=8$, and at $E_s/N_0=0.8$~dB for $L=4$, and are
compared with RM--polar, GA-based, and 5G polar codes. As shown in
Fig.~\ref{fig:scl_fer}(a), the proposed construction consistently outperforms the
benchmarks. At $\text{BLER}=10^{-7}$, gains of about $0.4$~dB over the 5G code and
$0.15$–$0.2$~dB over RM--polar and GA-based codes are observed for $L=2$. For
$L=4$, the proposed code provides roughly a $0.35$~dB gain over the 5G code and
$0.15$~dB over RM--polar and GA-based codes at $\text{BLER}=2\times10^{-8}$. For
$L=8$, it achieves the best or comparable performance across the simulated
range, with about a $0.1$~dB gain over RM--polar at $\text{BLER}=3\times10^{-7}$.

We next evaluate a $(512,256)$ polar code, with all proposed designs constructed
at $E_s/N_0=0.4$~dB and compared against RM-polar and GA-based polar codes, each optimized at
$E_s/N_0=0$~dB, as well as 5G standard polar codes. Fig.~\ref{fig:scl_fer}(b) shows that for $L=2$, the proposed code
achieves gains of up to $0.38$~dB over the 5G code and about $0.1$~dB over RM-polar
and GA-based codes at $\text{BLER}=10^{-7}$. For $L=4$, gains of approximately
$0.2$~dB over the 5G and GA-based codes and up to $0.15$~dB over RM-polar are
observed, with the proposed $L=4$ code outperforming RM-polar and GA-based
codes decoded with $L=8$ in the low-BLER regime. For $L=8$, the proposed code
maintains clear advantages, achieving about $0.2$~dB gain over the 5G code and
$0.35$–$0.4$~dB gains over RM-polar and GA-based codes at
$\text{BLER}=10^{-7}$.

Finally, we consider a $(1024,512)$ polar code, with all designs constructed at
$E_s/N_0=0$~dB. As shown in Fig.~\ref{fig:scl_fer}(c), the proposed construction
achieves comparable or superior performance for all list sizes. At
$\text{BLER}=10^{-7}$, gains of about $0.2$~dB over the 5G code and $0.25$ and
$0.1$~dB over RM-polar and GA-based codes are observed for $L=2$. For $L=4$, the
proposed code provides up to $0.1$~dB gain over the 5G code and $0.2$~dB over
RM-polar, while for $L=8$ it consistently matches or outperforms the benchmark
designs, achieving $0.05$–$0.1$~dB gains over the 5G code and about $0.2$~dB over
RM-polar.

Overall, these results demonstrate that the proposed construction reliably produces polar codes well matched to SCL decoding with different list sizes, yielding consistent BLER improvements without increasing decoding complexity, solely through optimized selection of information and frozen bits.

\begin{figure}[t]
    \centering
    \includegraphics[width=0.485\textwidth]{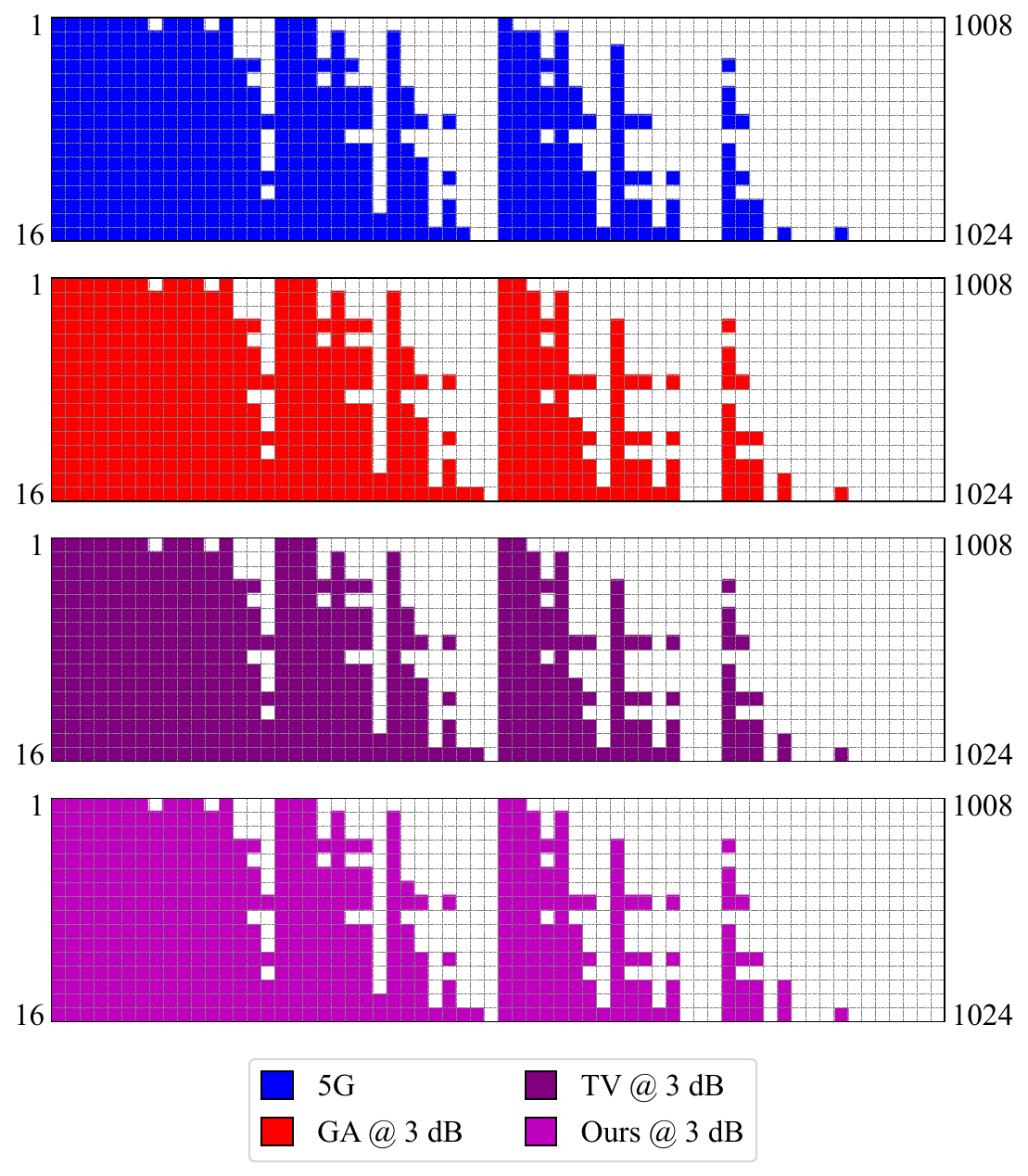}
    \caption{Bit distributions of the 5G, GA-based, TV, and proposed polar codes with parameters \(N = 1024\) and \(K = 512\). Colored pixels represent frozen bits, while white pixels represent information bits.}
    \label{fig:con_1024_sc}
\end{figure}

\subsection{Construction Analysis}\label{subsec:cons}

In this section, we compare the distributions of frozen and information bits in
the proposed constructions with those of existing benchmarks. The discussion is
organized into two parts. First, the SC-optimized construction is compared with
TV, GA-based, and 5G polar codes (the latter included for reference only, as it is
not optimized for SC decoding). Second, the SCL-optimized constructions with
list sizes $L=2,4,8$ are compared with RM-polar, GA-based, and 5G polar codes.

To visualize the bit distributions, frozen and information bits are represented
by a binary vector, where $0$ and $1$ denote frozen and information bits,
respectively. This vector is reshaped into a $d\times b$ matrix, with each column
corresponding to $d$ consecutive bit channels, and displayed as an image using
colored pixels for frozen bits and white pixels for information bits.

\begin{figure}
    \centering
    \includegraphics[width=0.485\textwidth]{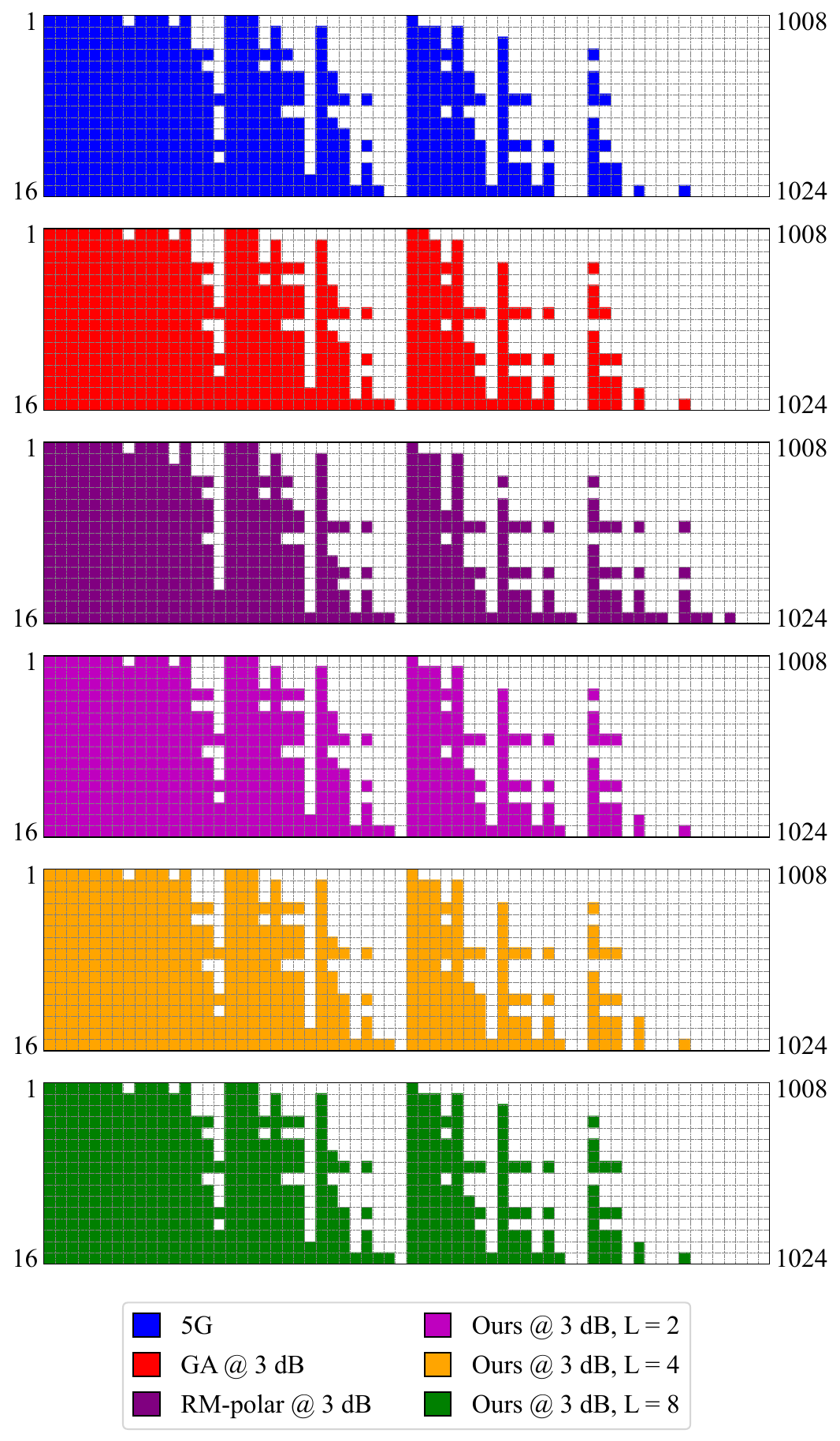}
    \caption{Bit distributions of the 5G standard, GA-based, RM–polar, and proposed polar codes optimized for list sizes \(L = 2, 4, 8\) with parameters \(N = 1024\) and \(K = 512\). Colored pixels represent frozen bits, while white pixels represent information bits.}
    \label{fig:bitmask_1024_SCL}
\end{figure}

\subsubsection{Polar code construction optimized for SC decoding}
Figure~\ref{fig:con_1024_sc} shows the bit-channel distributions of the proposed,
TV, GA-based, and 5G polar codes for $(N,K)=(1024,512)$. The proposed and benchmark
codes correspond to those evaluated in Fig.~\ref{fig:P256_128}(c), with the 5G
polar code included for reference to highlight structural differences.
Noticeable discrepancies are observed between the proposed construction and
existing designs. In particular, the proposed method freezes several high-index
bit channels while assigning some low-index channels as information bits, in
contrast to the monotonic reliability ordering typically exhibited by the
benchmark constructions.

\subsubsection{Polar code construction optimized for SCL decoding}
Figure~\ref{fig:bitmask_1024_SCL} illustrates the bit-channel distributions of the
proposed SCL-optimized polar codes for list sizes $L=2,4,8$, compared with the
5G, GA-based, and RM-polar codes. All codes use $(N,K)=(1024,512)$ and correspond
to those evaluated in the BLER results presented earlier.


The proposed SCL-optimized constructions exhibit adaptive bit allocations that
depend on the list size. Across all $L$, the algorithm freezes several mid- and
high-index channels that remain information bits in GA-based and 5G designs,
while still assigning highly reliable high-index channels as information bits.
This behavior distinguishes the proposed construction from both DE-based and
RM-polar approaches.

\subsection{Complexity Analysis}\label{subsec:comp}

To analyze the complexity of the proposed polar code constructions (Algorithms~\ref{alg:incremental_polar} and~\ref{alg:scl_cons}), we focus on the complexity analysis of Algorithm~\ref{alg:scl_cons}, which is optimized for SCL decoding with a list size \(L\). The complexity of Algorithm~\ref{alg:incremental_polar} can then be readily derived as a special case of this analysis by setting \(L = 1\).

The proposed polar code construction algorithms (Algorithm~\ref{alg:scl_cons}) incrementally identify and freeze the least reliable bit-channels based on the estimated PS error probability \( \mathbb{E}\left[\mathcal{P}^{\prime SCL}_{e,i}\right] \) defined in~(\ref{eq:bit_err_scl}). The overall computational complexity of the algorithm arises from three primary components: the estimation of \( \mathbb{E}\left[\mathcal{P}^{\prime SCL}_{e,i}\right] \) for each bit-channel, the iterative freezing and updating of the information and frozen sets, and the decoding operations required during each reliability evaluation. For each bit-channel \( i \in \mathcal{I} \), the algorithm estimates \( \mathbb{E}\left[\mathcal{P}^{\prime SCL}_{e,i}\right] \) by averaging over \( M \) channel realizations, expressed as 
\[
\mathbb{E}\left[\mathcal{P}^{\prime SCL}_{e,i}\right] \approx \frac{1}{M} \sum_{d=1}^{M} \mathcal{P}_{e,i}^{\prime\text{SCL}}\!\left(y_1^{N,d}\right),
\]
where \( \mathcal{P}_{e,i}^{\prime\text{SCL}}\!\left(y_1^{N,d}\right) \) is computed based on the decoder’s LLR outputs. Each evaluation of \( \mathcal{P}_{e,i}^{\prime\text{SCL}}\!\left(y_1^{N,d}\right) \) requires one SCL with list size \(L\) decoding pass, with a computational complexity of \( \mathcal{O}(LN \log N) \). Therefore, the total cost of evaluating all \( N \) bit-channels once over \( M \) channel realizations is \( \mathcal{O}(M L N \log N) \).

The algorithm performs \( N - K \) freezing iterations, where at each iteration \( t \), one bit index \( i^* \) with the highest estimated \( \mathbb{E}\left[\mathcal{P}^{\prime SCL}_{e,i^*}\right]\) is moved from the information set \(\mathcal{I}_t\) to the frozen set \(\mathcal{I}^c_t\). Since bit-channel reliabilities are updated or re-estimated at each step, the total number of reliability evaluations scales linearly with the number of frozen bits, resulting in an overall computational complexity of \( \mathcal{O}\left((N - K) M L N \log N\right) \). 

For comparison, the GA-based construction~\cite{trifonov2012efficient} has a computational complexity of \( \mathcal{O}(N \log N) \), while the TV algorithm~\cite{tal2013construct} exhibits a complexity of \( \mathcal{O}(\mu N \log N) \), where \(\mu\) denotes the output alphabet size used in the degrading/merging procedure of density evolution. Although Algorithm~\ref{alg:incremental_polar} has a higher complexity than GA-based and TV methods, it offers several key advantages. First, it utilizes decoder-based reliability metrics derived from LLRs, thereby capturing practical decoding error correlations neglected by DE-based approaches. Second, it is channel-agnostic and applicable to arbitrary symmetric B-DMCs without requiring analytical channel models. Furthermore, the polar codes constructed using the proposed algorithm outperform GA-based and TV codes under DSCF decoding.

The most comparable polar code construction to the proposed method is the MC–based approach, in which the BLER of each candidate code is estimated by decoding \( M \) simulated samples. If the target BLER at a given \( E_s/N_0 \) is \( \varepsilon \), the MC method typically requires \( M \gg 1 / \varepsilon \) samples to obtain an accurate estimate of the BLER. Since there are \( \binom{N}{K} \) possible code constructions to evaluate, the overall computational complexity is given by~\cite{liao2021construction}
\[
\mathcal{O}\!\left( L N \log N \cdot M \cdot \binom{N}{K} \right).
\]
Because \( N - K \ll \binom{N}{K} \), the computational complexity of the proposed polar code construction algorithm is several orders of magnitude lower than that of MC-based approaches.


\section{Conclusion}\label{sec:con}

This paper derived a refined semi-analytical expression for estimating the BLER of polar codes under SC and SCL decoding. For SC decoding, a new expression derived from the decoder’s LLRs enabled a recursive construction method that selects frozen bits based on their estimated reliability. For SCL decoding, the BLER was decomposed into PL and PS error components, and a semi-analytical PL expression was developed to construct polar codes optimized for a given list size. Simulation results showed that the proposed SC-optimized designs achieved up to a 0.2~dB performance gain under DSCF decoding compared to benchmark constructions over AWGN channels and exhibited superior performance on BSC channels. For SCL-optimized polar codes, the proposed method achieved comparable or better performance across all list sizes, with gains of up to 0.4~dB relative to benchmark designs.

These findings affirm the potential of our method to advance practical polar code design. Future research could focus on further reducing the computational complexity of the proposed construction method.

\bibliographystyle{IEEEtran}
\bibliography{ref}

\end{document}